\definecolor{mypink}{rgb}{.99,.91,.95}
\newtheorem{Theorem}{Theorem}
\newtheorem{Lem}{Lemma}%
\def\BibTeX{{\rm B\kern-.05em{\sc i\kern-.025em b}\kern-.08em
		T\kern-.1667em\lower.7ex\hbox{E}\kern-.125emX}}
\newtheorem{remark}{Remark}
\begin{document}
\title{3-D Positioning and Environment Mapping for mmWave Communication Systems}

\author{Jie~Yang,
	Shi~Jin,	Chao-Kai~Wen,
	Jiajia~Guo,
	and~Michail~Matthaiou
	\thanks{Jie~Yang, Shi~Jin, and Jiajia~Guo are with the National Mobile Communications Research Laboratory, Southeast University, Nanjing, China (e-mail: \{yangjie;jinshi;jiajiaguo\}@seu.edu.cn). Chao-Kai~Wen is with the Institute of Communications Engineering, National Sun Yat-sen University, Kaohsiung, 804, Taiwan (e-mail: chaokai.wen@mail.nsysu.edu.tw). Michail~Matthaiou is with the Institute of Electronics, Communications and Information Technology (ECIT), Queen's University Belfast, Belfast, U.K. (e-mail: m.matthaiou@qub.ac.uk).}
}

\maketitle	
\vspace{-1cm}
\begin{abstract}

Millimeter-wave (mmWave) cloud radio access networks (CRANs) provide new opportunities for accurate cooperative localization, in which large bandwidths, large antenna arrays, and increased densities of base stations allow for their unparalleled delay and angular resolution. Combining localization into communications and designing simultaneous localization and mapping (SLAM) algorithms are challenging problems. This study  considers the joint position and velocity estimation and environment mapping problem in a three-dimensional  mmWave CRAN architecture. We first embed cooperative localization into communications and establish the joint estimation and mapping model with hybrid delay and angle measurements. Then, we propose a closed-form weighted least square (WLS) solution for the joint estimation and mapping problems. The proposed WLS estimator is proven asymptotically unbiased and confirmed by simulations as effective in achieving the Cramer-Rao lower bound (CRLB) and the desired decimeter-level accuracy. Furthermore, we propose a WLS-Net-based SLAM algorithm by embedding neural networks into the proposed WLS estimators to replace the linear approximations. The solution possesses both powerful learning ability of the neural network and robustness of the proposed geometric model, and the  ensemble learning is applied to further improve positioning accuracy. A public ray-tracing dataset is used in the simulations to test the performance of the WLS-Net-based SLAM algorithm, which is proven fast and effective in attaining the centimeter-level accuracy.

\end{abstract}

\begin{IEEEkeywords}
Cooperative localization,
CRAN,
hybrid measurements,
mmWave,
neural network,
weighted least square.
\end{IEEEkeywords}

\vspace{-0.4cm}
\section{Introduction}
\begin{spacing}{1.55}
Future networks should be able to offer unlimited coverage anywhere and anytime to any device to stimulate the amalgamation of positioning and wireless communications \cite{five}.	
Millimeter-wave (mmWave) communication is a promising technology to meet such requirements in future wireless communications. Localization is a highly desirable feature of mmWave communications \cite{loc0,locsum}. The location of the user equipment (UE) can be used to provide location-based services, such as navigation, mapping, social networking, augmented reality, and intelligent transportation systems, among others. Additionally, location-aware communication can be realized by the obtained location information to improve communication capacity and network efficiency \cite{loc1}.

mmWave bands offer larger bandwidths than the presently used sub-6 GHz bands, and higher resolution of the time delay (TD), time difference of arrival (TDoA), and frequency difference of arrival (FDoA) can be obtained accordingly. In addition, the penetration loss from mmWave bands is inherently large \cite{overview,Han2017}. Hence, the difference between the received power of the line-of-sight (LoS) path and the non-line-of-sight (NLoS) path is pronounced, which makes it much easier to eliminate the NLoS interference \cite{sparse}. As a manner of compensating severe penetration loss and increased path-loss, large antenna arrays and highly directional transmission should be combined \cite{lens} to facilitate the acquisition of the angle of arrival (AoA) and the angle of departure (AoD). Moreover, cloud radio access networks (CRANs) can enhance mmWave communication by improving the network coverage \cite{cran}. Therefore, mmWave CRANs can offer accurate cooperative localization in urban and indoor environments in which conventional GPS may fail. In return, the location information can improve the scalability, latency, and robustness of future networks \cite{loc}.


Localization has become a popular research topic in recent years. Different localization techniques have been summarized in \cite{loc2}.
The techniques are mainly divided into two categories: direct localization \cite{direct3,direct4} and indirect localization \cite{TD1,TD3,AOA,l2,l3,l4,l5}.
Direct localization converts directly the received waveform into a location estimation, but the process is highly complex.
By contrast, indirect localization applies the principle in which the channel parameters (AoA, TD, TDoA, and FDoA) are first extracted from the received waveform and grouped together as a function of the location parameters, and then different estimators are used to determine the UE positions.
In \cite{TD1,TD3}, several different closed-form TD-based algorithms have been proposed.
Few different AoA-based methods were developed in \cite{AOA} and in the related references.
AoA and its combination with ranging estimates are expected to achieve much higher position accuracy.
The works in \cite{l2} considered the positioning problem of three-dimensional (3-D) and stationary targets in MIMO radar systems that utilize hybrid TD/AoA measurements,
from which a novel computationally efficient closed-form algorithm is developed with the estimator to achieve the Cramer-Rao lower bound (CRLB) under small measurement noise.
The works of \cite{l3,l4} estimated the location and velocity of a moving target in radar systems.
The algorithm in \cite{l3} separately estimated position and velocity by introducing two hybrid pseudo-linear estimators and by using the hybrid TDoA and FDoA measurements of a signal received by different receivers.
The research in \cite{l4} dealt with a hybrid pseudo-linear estimator for the target motion analysis of a constant-velocity target in the two-dimensional (2-D) scenario by using AoA, TDoA, and FDoA measurements.
The work in \cite{l5} studied 3-D downlink positioning with a single reference station that requires both AoA and AoD measurements (obtained after beam training) and abundant single-bounce multipath components.

In this study, we consider the 3-D indirect localization of UE with variable velocities in mmWave systems by using hybrid AoA, TDoA, and FDoA measurements.
Channel parameters (e.g., AoA, TD, TDoA, and FDoA) can be measured much more accurately in the initial access and communication stages owing to the unparalleled delay and angular resolution of mmWave communications.
In particular, we focus on methods to improve positioning accuracy under the CRAN architecture.
CRANs provide a cost-effective way to achieve network densification \cite{co1,co2}, in which distributed low-complexity remote radio heads (RRHs) are deployed close to the users and coordinated by a central unit (CU) for joint processing.
The obtained location information can be shared with network nodes.
Therefore, cooperative localization is expected to improve the accuracy of the localization.


The LoS path is the most favorable path for localization and thus has been studied for a long time.
Extensive researches investigated the NLoS mitigation techniques used to reduce the positioning error \cite{nlos1,nlos2}.
However, a recent clarification in \cite{foe} has shown that future communication systems will turn multipath channels ``from foe to friend"  by leveraging distinguishable multipath components resulting from  unparalleled delay and angular resolution in mmWave systems.
Hence, the information from reflected signals can be exploited in the reconstruction of the 3-D map of the surrounding environment, also called environment mapping.
The simultaneous localization and mapping (SLAM) method, which has become popular in mmWave communications, was first proposed in robotics research \cite{slam}.
mmWave imaging and communication were both leveraged for SLAM in \cite{slam1} on the assumption of specular reflection for NLoS paths.
In \cite{slam3}, different types of reflections were considered (specular reflection, spread, diffusion, and their combination), however, the initial UE's position and the moving direction were needed.
Given that higher-order-reflected NLoS paths inherently tend to be affected considerably by attenuation at mmWave frequencies \cite{path,NLOS}, we only consider in this study the LoS path and the single-bounce NLoS paths.
In particular, we aim to establish an effective environment mapping method that does not require specular reflection assumption and initial position and moving direction of the UE.

All of the positioning techniques mentioned above are geometric approaches, where delay and angular measurements are extracted and from which the position and velocity of the UE, as well as the scatterers, are triangulated or trilaterated.
A function can be approximated by geometric techniques given the existence of an underlying transfer function between the measurements and the positions.
In recent years, artificial intelligence (AI) has received great attention because of its promising performance in solving complicated problems.
Researchers have utilized neural networks to learn underlying transfer functions, and AI-based positioning solutions, such as fingerprinting (FP) methods \cite{fingerprint0,fingerprint1}, have emerged.
A deep learning-based indoor FP system was presented in \cite{fingerprint0} to achieve meter-level positioning accuracy. The experiments in \cite{fingerprint1} showed the feasibility of using deep learning methods for localization in actual outdoor environments.
The FP methods have alleviated modeling issues; however, extremely large amounts of data are required to meet the high requirements of positioning accuracy.
To our best knowledge, the work on positioning by combining neural networks with geometric models are rare and thus the focus of the present study.

We address in this study the 3-D positioning problem of moving users in mmWave communication systems and environment mapping. The contributions of this study are as follows:

\begin{itemize}
\item \textbf{Joint Position and Velocity Estimation}:
We first establish a joint position and velocity estimation model by utilizing hybrid TDoA/FDoA/AoA measurements. Then, we develop an efficient closed-form weighted least square (WLS) algorithm. Unlike other closed-form WLS-based methods \cite{l3} with multistage estimators, the proposed method can determine the UE’s position and velocity in one stage only. The estimator is proven asymptotically unbiased and confirmed by simulations as effective in achieving CRLB under small measurement noise.

\item \textbf{Environment Mapping}:
We exploit the single-bounce NLoS paths and the estimated UE position to build the environment mapping model and locate the scatterers. Then, we deduce the closed-form WLS solution to determine the scatterer location. The ray-tracing dataset provided in \cite{deepmimo,raytracing} is used to verify the rationality of the single-bounce NLoS path assumption, and the proposed environment mapping algorithm is used to map the environment geometry that generates the ray-tracing dataset. The proposed environment mapping algorithm can achieve the CRLB.

\item \textbf{Neural Network-Assisted WLS Algorithm}:
We propose a neural network-assisted WLS algorithm named WLS-Net to further improve the positioning accuracy of localization and environment mapping. The algorithm benefits from the powerful learning ability of the neural network and the robustness of the geometric model.
In addition, WLS-Net is fast because it can eliminate iterations in the proposed WLS algorithm.
To our best knowledge, the present study is the first to combine a neural network and a geometric model in 3-D SLAM methods.
Furthermore, we embed the ensemble learning into the proposed WLS-Net, named eWLS-Net, to enhance localization accuracy.
The simulation results show that WLS-Net significantly outperforms the WLS algorithm when the measurement error has an intrinsic relationship.
\end{itemize}

The rest of this paper is organized as follows. In Section \uppercase\expandafter{\romannumeral2}, we introduce a 3-D mmWave system model and the relationship between the channel and the location parameters. The WLS-based joint position and velocity estimation algorithm is proposed in Section \uppercase\expandafter{\romannumeral3}. In Section \uppercase\expandafter{\romannumeral4}, we establish the WLS-based environment mapping algorithm. In Section \uppercase\expandafter{\romannumeral5}, we embed the neural networks into the proposed WLS algorithms, and ensemble learning is applied. Our simulation results are presented in Section \uppercase\expandafter{\romannumeral6}. We conclude the study in Section \uppercase\expandafter{\romannumeral7}.
\end{spacing}

{\bf Notations}---In all aspects of this study, uppercase boldface $\mathbf{A}$ and lowercase boldface $\mathbf{a}$ denote matrices and vectors, respectively. For any matrix $\mathbf{A}$, the superscripts $\mathbf{A}^{-1}$ and $\mathbf{A}^{T}$ stand for inverse and transpose, respectively,
and $\mathbf{A}(i,:)$ represents the $i$-th row of $\mathbf{A}$.
For any vector $\mathbf{a}$, the 2-norm is denoted by $\|\mathbf{a}\|$, $\mathbf{a}(i)$ represents the $i$-th entry in $\mathbf{a}$, and $\mathbf{a}(i:j)$ represents the $i$-th to $j$-th entries in $\mathbf{a}$.
$\mathbf{0}$ denotes an all-zero vector with a given size.
$\text{diag}\{\cdot\}$ denotes a diagonal matrix with entries in $\{\cdot\}$,
and
$\mbox{blkdiag}(\mathbf{A}_1,\mathbf{A}_2,\ldots,\mathbf{A}_k)$ denotes a block-diagonal matrix constructed
by $\mathbf{A}_1$, $\mathbf{A}_2$, and $\mathbf{A}_k$.
$\mathbb{E}\{\cdot\}$ denotes statistical expectation.
The notation $a^\circ$ is the true value of the estimated parameter $a$, while
 $\doteq$ denotes ``approximately equal".
 For any complex value $c$, $|c|$ denotes the module of $c$.

\vspace{-0.25cm}
\section{System Model}
We study the localization (joint position and velocity estimation of the moving user) and environment mapping problem in mmWave CRAN with $N$ RRHs \cite{cran},
as shown in Fig. \ref{fig:model}.
Each RRH is equipped with a large antenna array with $M$ antenna elements and connected to the CU via an individual fronthaul link with finite capacity. The CRAN has $K$ single omni-antenna UEs, in which the number of users is not greater than the total number of RRHs, i.e., $K\leq N$.
\begin{figure}
	\centering
	\includegraphics[scale=0.55,angle=0]{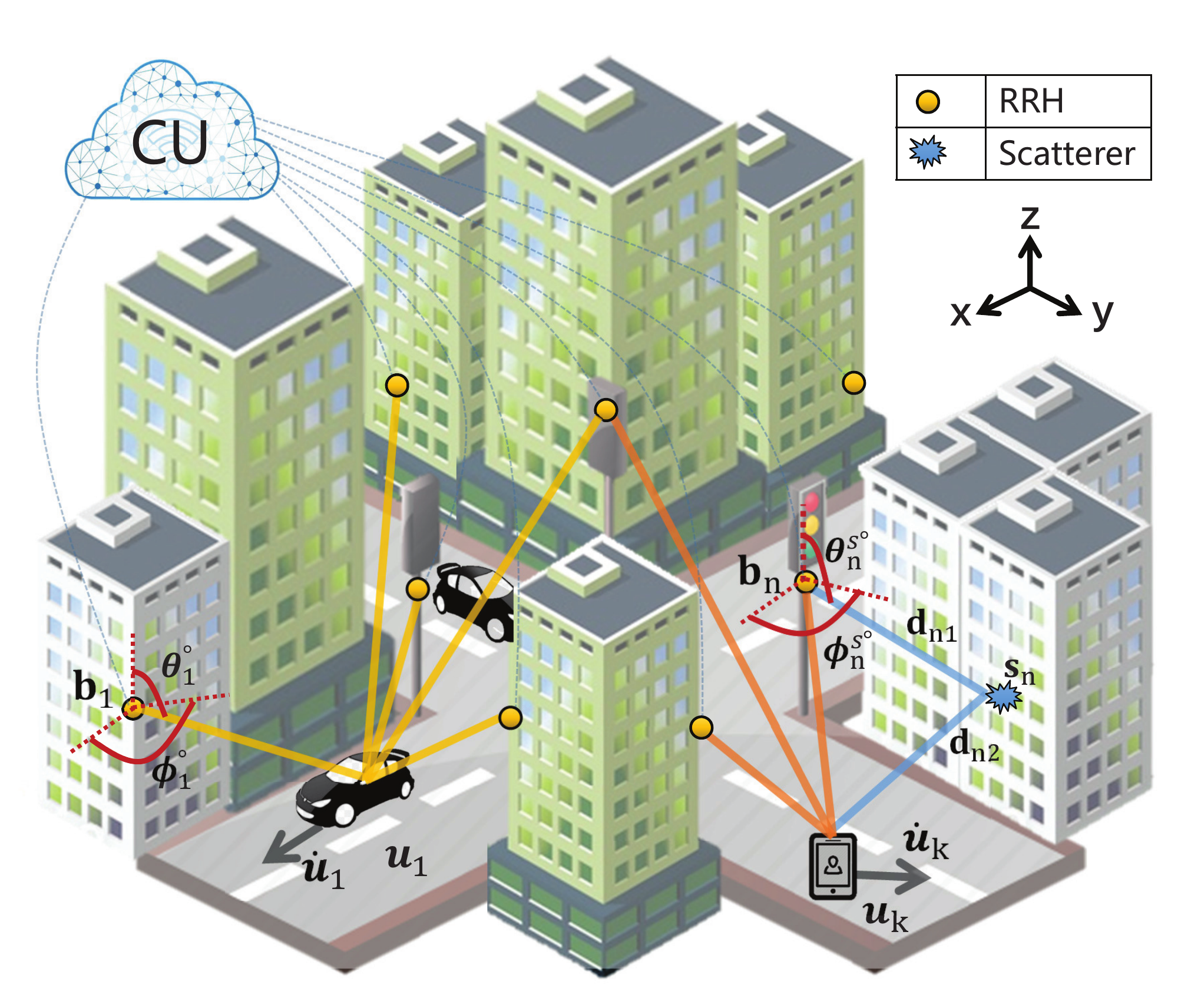}	
	\captionsetup{font=footnotesize}
	\caption{Illustration of the mmWave CRAN system model in which RRHs are connected with the CU.}
	\label{fig:model}	
\end{figure}

\vspace{-0.25cm}
\subsection{System Geometry}
We consider a 3-D space $\mathbb{R}^3=\{[x, y, z]^T: x, y, z\in\mathbb{R}\}$ with $N$ known RRHs located at $\mathbf{b}_n=[x_n^b, y_n^b, z_n^b]^T$, such that $n=1, 2, \ldots, N$.
The geometry between the RRHs and the UEs is shown in Fig. \ref{fig:model}.
We assume that the unknown position and velocity of $K$ UEs
are represented by $\mathbf{u}_k = [x_{k}, y_{k}, z_{k}]^T$ and ${\dot{\mathbf{u}}_k}=[\dot{x}_{k}, \dot{y}_{k}, \dot{z}_{k}]^T$, respectively, with $k=1, 2, \ldots, K$.
Due to the sparsity of the mmWave channel, we assume that only the LoS path and the single-bounce NLoS path exist\cite{sparse}.
For ease of description, we assume that an unknown scatterer\footnote{We assume that one scatterer contributes to one NLoS path, the proposed positioning and mapping algorithm is also feasible for more than one scatterers.} exists between the $n$-th RRH and the $k$-th UE, as represented by $\mathbf{s}_{n,k} = [x_{n,k}^s, y_{n,k}^s, z_{n,k}^s]^T$.
Our goal is to determine $\mathbf{u}_k$, $\dot{\mathbf{u}}_k$, and $\mathbf{s}_{n,k}$, where $n=1, 2, \ldots, N$ and $k=1, 2, \ldots, K$, by the signals received at the RRHs.

\vspace{-0.5cm}
\subsection{Channel Model}
UEs transmit over the mmWave wideband frequency-selective channel.
The channel between UE $k$ and RRH $n$ has $C_{n,k}$ clusters containing $L_{n,k,c}$ paths each.
Given that the mmWave channel is sparse, we assume that $C_{n,k}L_{n,k,c}\leq Q$, where $Q$ is the number of RF chains for each RRH.
Hence, the channel coefficient vector $\mathbf{h}_{n,k}(t)\in \mathbb{C}^{M\times1}$ at time $t$ between UE $k$ and RRH $n$ can be expressed by using the clustered channel model \cite{c-channel} as
\begin{equation}\label{channel}
\mathbf{h}_{n,k}(t) =  \sum_{c=0}^{C_{n,k}-1}\sum_{l=0}^{L_{n,k,c}-1}\alpha_{n,k,c,l} \delta(t-\tau_{n,k,c,l})\mathbf{a}(\phi_{n,k,c,l},\theta_{n,k,c,l}),
\end{equation}
where $\alpha_{n,k,c,l}$, $\tau_{n,k,c,l}$, $\phi_{n,k,c,l}$, and $\theta_{n,k,c,l}$ denote the complex gain, delay, azimuth AoA, and elevation AoA for the $l$-th path of the $c$-th cluster, respectively, and $\mathbf{a}(\cdot)$ is the array response vector at the RRH.

\vspace{-0.4cm}
\subsection{Transmission Model}\label{ra}
Positioning and mapping can be embedded in either the initial access stage or data transmission stage without additional overhead.
Take the initial access \cite{initial} as an example. Each  RRH periodically transmits synchronization signals to the UEs. After detecting the synchronization signals and decoding the broadcast messages, the UE randomly selects one of the small numbers (at most 64 in LTE) of the waveforms, also called random access (RA) preambles, and transmits it in one of the RA slots.
Hence, we can assume that there is no need to consider the uplink interference from the other UEs due to the near-orthogonality state in terms of waveform and time. Subsequently, we take one UE to simplify the expressions and thus omit the $k$ index.

The UE omni-directionally sends a signal $\sqrt{E_s}s(t)$,\footnote{The major drawback of the uplink positioning comes from the total transmit power limit of the UE, which can be compensated by multi-antennas and beamforming at the UE side.} in which $E_s$ is the transmitted energy, and $\mathbb{E}\{|s(t)|^2\!\}\!=\!1$.
The received signal $\mathbf{r}_{n}(t) \in \mathbb{C}^{Q\times 1}$ at RRH $n$ is given by
\begin{equation}\label{receive}
\mathbf{r}_{n}(t)= \mathbf{A} \sum_{c=0}^{C_{n}-1}\sum_{l=0}^{L_{n,c}-1}\alpha_{n,c,l} \sqrt{E_s}s(t-\tau_{n,c,l})\mathbf{a}(\phi_{n,c,l},\theta_{n,c,l})+\mathbf{n}(t),
\end{equation}
where ($\phi_{n,c,l}$, $\theta_{n,c,l}$, $\tau_{n,c,l}$) are the channel parameters for different paths, $\mathbf{A} \in \mathbb{R}^{Q\times M}$ is the RF switch network in the mmWave hybrid beamforming architecture, and $\mathbf{n}(t)  \in \mathbb{C}^{Q\times 1}$ is the zero-mean white Gaussian noise with a known power spectrum density.
In this study, we focus on the proposed localization algorithm itself.
Then, all distinct paths with estimated AoAs and delays \footnote{The time delay is estimated by detecting the first peak of the correlation between the received signal and the transmitted reference signal, e.g. PRS.} in RRH $n$ are transmitted to the CU for $n = 1, \ldots, N$.
The LoS path can be identified easily at the mmWave frequencies because it is stronger than the NLoS paths.
We assume that a threshold should be selected such that the CU can generally select $N_{a}$ ($2 \leq N_{a} \leq N$) LoS paths from $N$ RRHs to ensure high localization accuracy.
After obtaining the channel parameters $(\phi_{n,c,l}, \theta_{n,c,l}, \tau_{n,c,l})$, the CU estimates the position and velocity of the UE based on the channel parameters of the LoS paths and maps the environment with the NLoS paths.

\vspace{-0.25cm}
\subsection{Relation Between Channel and Location Parameters}
The hybrid measurements we use in this study are TDoA, FDoA, and AoA, which are related to the channel parameters as follows: delays $\tau_{n,c,l}$ and AoA pairs $(\phi_{n,c,l},\theta_{n,c,l})$ for distinct paths
obtained from \eqref{receive}.
Then, we map these channel parameters to the location parameters.
For the sake of presentation clarity,
the unknown position and velocity of the UE are $\mathbf{u}^\circ=[x^\circ, y^\circ, z^\circ]^T$ and $\dot{\mathbf{u}}^\circ=[\dot{x}^\circ, \dot{y}^\circ, \dot{z}^\circ]^T$, respectively.
Moreover, we assume that only one scatterer $\mathbf{s}_{n}^\circ = [x_{n}^{s\circ}, y_{n}^{s\circ}, z_{n}^{s\circ}]^T$ and one NLoS path exist between the RRH $n$ and the UE. Hence, $(c=0,l=0)$ can be denoted as the LoS path and $(c=1,l=1)$ can be denoted as  the NLoS path.
\begin{itemize}
\item \textbf{TDoAs}:
For the LoS path, the ToA is given by
$
\tau_{n,0,0} = ||\mathbf{u}^\circ-\mathbf{b}_n||/v_c + \omega,
$
where $v_c$ is the speed of light, and $\omega$ is the unknown clock bias between CRAN and UE. The \textbf{TDoA} of a signal received by the RRH $n$ and $1$ is
$\Delta\tau_{n} = \tau_{n,0,0} - \tau_{1,0,0}$. Therefore, the unknown $\omega$ can be eliminated.
Let
\begin{eqnarray} \label{2}
r_{n1}^\circ=\Delta\tau_{n}\times v_c = ||\mathbf{u}^\circ-\mathbf{b}_n|| - ||\mathbf{u}^\circ-\mathbf{b}_1||,
\end{eqnarray}
and we define
\begin{eqnarray} \label{1}
r_n^\circ=||\mathbf{u}^\circ-\mathbf{b}_n||=\sqrt{(\mathbf{u}^\circ-\mathbf{b}_n)^T (\mathbf{u}^\circ-\mathbf{b}_n)}.
\end{eqnarray}
For the NLoS path, we have
$
\tau_{n,1,1} = ||\mathbf{u}^\circ-\mathbf{s}_n^\circ||/v_c + ||\mathbf{s}_n^\circ-\mathbf{b}_n||/v_c + \omega.
$
The TDoA of the single-bounce NLoS path received by the RRH $n$ and reference time (the delay of LoS path received by the RRH $1$)
is
$\Delta\tau_{n}^s = \tau_{n,1,1} - \tau_{1,0,0}$.
Let
\begin{eqnarray} \label{sc1}
r_{n1}^{s\circ} \! = \! \Delta\tau_{n}^s \!\times\! v_c \! = \! ||\mathbf{u}^\circ\!-\!\mathbf{s}_n^\circ|| \! + \! ||\mathbf{s}_n^\circ\!-\!\mathbf{b}_n||\! -\! ||\mathbf{u}^\circ\!-\!\mathbf{b}_1||.
\end{eqnarray}
Hence, $r_{n1}^\circ$ and $r_{n1}^{s\circ}$ are the TDoA-related parameters, which will be used in our proposed algorithms, and they are derived from the TDoA by multiplying with $v_c$.

\item \textbf{FDoAs}:
FDoA is the change rate of TDoA with time. The time derivative of ToA is FoA, namely, Doppler shift.
The time derivative of $r_n^\circ$ in \eqref{1} is given by
\begin{equation}\label{3}
\hspace{-0.3cm}\dot{r}_{n}^\circ \!=\!
 \dfrac{\dot{\mathbf{u}}^{\circ T}\mathbf{u}^\circ+\mathbf{u}^{\circ T}\dot{\mathbf{u}}^\circ-2\dot{\mathbf{u}}^{\circ T}\mathbf{b}_n}{2\sqrt{(\mathbf{u}^\circ-\mathbf{b}_n)^T (\mathbf{u}^\circ-\mathbf{b}_n)}}\!=\!\dfrac{\dot{\mathbf{u}}^{\circ T}(\mathbf{u}^\circ-\mathbf{b}_n)}{r_n^\circ}.
\end{equation}
We define
\vspace{-0.25cm}
\begin{equation} \label{4}
\dot{r}_{n1}^\circ=\dot{r}_n^\circ-\dot{r}_1^\circ,
\vspace{-0.25cm}
\end{equation}
where $\dot{r}_{n1}^\circ$ denotes the FDoA-related parameters derived from the FDoA by multiplying with $v_c$.
The FDoA is used to estimate the velocity of the UE and hence illustrated only for the LoS path.

\item \textbf{AoAs}:
For the LoS path,
\vspace{-0.25cm}
\begin{equation} \label{5}
\phi_{n}^\circ=\phi_{n,0,0}^\circ= \arctan\frac{y^\circ-y_n^b}{x^\circ-x_n^b},  \ \ \ \ \ \theta_{n}^\circ=\theta_{n,0,0}^\circ=\arcsin\frac{z^\circ-z_n^b}{||\mathbf{u}^\circ-\mathbf{b}_n||},
\end{equation}
holds. Then, for the NLoS path, we have
\vspace{-0.25cm}
\begin{equation} \label{55}
\phi_n^{s\circ} = \phi_{n,1,1}^\circ= \arctan\frac{y_{s,n}^\circ-y_n^b}{x_{s,n}^\circ-x_n^b} ,  \ \ \ \theta_n^{s\circ} = \theta_{n,1,1}^\circ=\arcsin\frac{z_{s,n}^\circ-z_n^b}{||\mathbf{s}_n^\circ-\mathbf{b}_n||}.\\
\end{equation}
\end{itemize}
The hybrid measurements (TDoA, FDoA, and AoA) will be utilized to estimate the position and velocity of the UE and map the propagation environment.
Given that the relationships \eqref{2}-\eqref{55} are nonlinear and nonconvex functions of $\mathbf{u}^\circ$, $\dot{\mathbf{u}}^\circ$, or $\mathbf{s}_n$, for $n=1, 2, \ldots, N$,  solving the positioning and mapping problem is not a trivial task.
In the subsequent sections,
effective positioning and mapping estimators will be proposed.

\vspace{-0.2cm}
\section{Joint Position and Velocity Estimation}\label{wls1}
In this section, we propose an effective localization method to simultaneously predict the unknown position $\mathbf{u}^\circ$ and velocity $\dot{\mathbf{u}}^\circ$ of the UE.

\vspace{-0.3cm}
\subsection{Joint Position and Velocity Estimation Model}
We first obtain the matrix representation of the joint position and velocity estimation model.
Let{\begingroup\makeatletter\def\f@size{12}\check@mathfonts		 \def\maketag@@@#1{\hbox{\m@th\normalsize\normalfont#1}}\setlength{\arraycolsep}{0.0em}\setlength{\arraycolsep}{0.0em}
	\begin{eqnarray}\label{angle}
	\begin{split}
	&\mathbf{a}_n^\circ=[\cos\theta_n^\circ\cos\phi_n^\circ, \cos\theta_n^\circ\sin\phi_n^\circ, \sin\theta_n^\circ]^T, \ \ \ \ \ \ \ \  \mathbf{c}_n^\circ=[-\sin\phi_n^\circ, \cos\phi_n^\circ, 0]^T,\\
	&\mathbf{d}_n^\circ=[-\sin\theta_n^\circ\cos\phi_n^\circ, -\sin\theta_n^\circ\sin\phi_n^\circ, \cos\theta_n^\circ]^T,
	\end{split}
	\end{eqnarray}\setlength{\arraycolsep}{5pt}\endgroup}for $n=1,\ldots,N_{a}$. By applying the nonlinear transformation to \eqref{2},\eqref{1},\eqref{3}-\eqref{5} and by utilizing the angular relationships in \eqref{angle}, we obtain the noise-free matrix representation of the joint position and velocity estimation model, which is given by
\vspace{-0.25cm}
\begin{equation} \label{191}
\mathbf{h}=\mathbf{G}\mathbf{x}^\circ,
\vspace{-0.25cm}
\end{equation}
where $\mathbf{x}^\circ=[\mathbf{u}^{\circ T}, \dot{\mathbf{u}}^{\circ T}]^T$ is the vector of unknown position and velocity of the UE,\\ $\mathbf{h}=[\mathbf{q}_2^T,\dots,\mathbf{q}_{N_{a}}^T, \mathbf{h}_1^T, \dots,\mathbf{h}_{N_{a}}^T]^T$, $\mathbf{G}=[\mathbf{P}_2^T,\dots,\mathbf{P}_{N_{a}}^T, \mathbf{G}_1^T, \dots,\mathbf{G}_{N_{a}}^T]^T$, and
\begin{equation*}
\begin{split}
&\mathbf{q}_n\!=\!\begin{pmatrix}
(r_{n1}^\circ)^2\!-\!2r_{n1}^\circ \mathbf{a}_1^{\circ T}\mathbf{b}_1\!-\!\mathbf{b}_n^T \mathbf{b}_n\!+\!\mathbf{b}_1^T \mathbf{b}_1 \\
\dot{r}_{n1}^{\circ}{r}_{n1}^{\circ}-\dot{r}_{n1}^{\circ}\mathbf{a}_1^{\circ T}\mathbf{b}_1\\
\end{pmatrix},\ \ \ \ \ \ \ \ \ \ \ \ \ \mathbf{h}_n\!=\!\begin{pmatrix}
\mathbf{c}_n^{\circ T}\!\mathbf{b}_n \\
\mathbf{d}_n^{\circ T}\!\mathbf{b}_n
\end{pmatrix},\\
\end{split}
\end{equation*}
\begin{equation*}
\begin{split}
&\mathbf{P}_n\!=\!\begin{pmatrix}\!
2[(\mathbf{b}_1\!-\!\mathbf{b}_n)^{T}\!-\!r_{n1}^{\circ}\mathbf{a}_1^{\circ T}] & \mathbf{0}\\
-\dot{r}_{n1}^{\circ}\mathbf{a}_1^{\circ T} & (\!\mathbf{b}_1\!-\!\mathbf{b}_n\!)^{T}\!-\!r_{n1}^{\circ}\mathbf{a}_1^{\circ T}\!
\end{pmatrix}\!, \ \  \mathbf{G}_n\!=\!\begin{pmatrix}
\mathbf{c}_n^{\circ T} \!&\! \mathbf{0}\\
\mathbf{d}_n^{\circ T} \!&\! \mathbf{0}
\end{pmatrix}\!,
\end{split}
\end{equation*}
and $\mathbf{0}$ is a $1\times 3$ zero vector.
The detailed derivations of \eqref{191} are listed in Appendix \ref{A}.

The elements in vector $\mathbf{h}$ and matrix $\mathbf{G}$ in (\ref{191}) are operations of the noise-free channel parameters.
Let the measured parameters with noise replace the true parameters in (\ref{191}) (i.e., let $r_{i1}\!=\!r_{i1}^{\circ}\!+\!\Delta r_{i1}$, $\dot{r}_{i1}=\dot{r}_{i1}^{\circ}+\Delta \dot{r}_{i1}$, $\phi_j=\phi_j^{\circ}+\Delta \phi_j$, and
$\theta_j=\theta_j^{\circ}+\Delta\theta_j$ replace $r_{i1}^{\circ}$, $\dot{r}_{i1}^{\circ}$, $\phi_j^{\circ}$, and $\theta_j^{\circ}$ in $\mathbf{h}$ and $\mathbf{G}$) for $i=2,\ldots,N_{a}$ and $j=1,\ldots,N_{a}$. Then, we derive
\vspace{-0.25cm}
\begin{equation}\label{20}
\tilde{\mathbf{h}}=\tilde{\mathbf{G}}\mathbf{x}^{\circ}+\mathbf{e},
\vspace{-0.25cm}
\end{equation}
where $\tilde{\mathbf{h}}$ and $\tilde{\mathbf{G}}$ are the measured counterparts, and $\mathbf{e}$ is the error vector.
Equation \eqref{20} represents the pseudo-linear relationship between the measured channel parameters and unknown location and velocity parameters.

\vspace{-0.2cm}
\subsection{WLS-Based Joint Estimation Algorithm}\label{wls}
By relying on the proposed model in \eqref{20}, we can obtain the WLS
estimation of $\mathbf{x}^{\circ}$ according to \cite{KAY}. Denote
\begin{equation}
\mathbf{m}=[r_{21}, \dot{r}_{21}, \ldots, r_{{N_{a}}1}, \dot{r}_{{N_{a}}1},\phi_1, \theta_1,\ldots,\phi_{N_{a}}, \theta_{N_{a}}]^T,
\end{equation}
\begin{equation}\label{mo}
\mathbf{m}^\circ=[r_{21}^\circ, \dot{r}_{21}^\circ, \ldots, r_{{N_{a}}1}^\circ, \dot{r}_{{N_{a}}1}^\circ,\phi_1^\circ, \theta_1^\circ, \ldots,\phi_{N_{a}}^\circ, \theta_{N_{a}}^\circ]^T,
\vspace{-0.25cm}
\end{equation}
and
\begin{equation}\label{m}
\Delta \mathbf{m}=[\Delta r_{21}, \Delta\dot{r}_{21}, \ldots, \Delta r_{N_{a}1}, \Delta\dot{r}_{N_{a}1}, \Delta\phi_1, \Delta\theta_1, \ldots, \Delta\phi_{N_{a}}, \Delta\theta_{N_{a}}]^T,
\vspace{-0.25cm}
\end{equation}
and then we have
\vspace{-0.25cm}
\begin{eqnarray}\label{m0}
\mathbf{m}=\mathbf{m}^\circ+\Delta \mathbf{m},
\end{eqnarray}
where $\mathbf{m}$ is the vector of measured parameters, $\mathbf{m}^\circ$ is the vector of true parameters, and $\Delta \mathbf{m}$ is the vector of noise terms.
We assume that $\Delta \mathbf{m}$ is a zero-mean Gaussian vector, whose covariance matrix is $\mathbf{Q}$.
\begin{Lem}\label{lemma}
	The WLS estimation of  $\mathbf{x}^{\circ}$ is given by
	\vspace{-0.15cm}
	\begin{equation}\label{21}
	\mathbf{x}=(\tilde{\mathbf{G}}^{T}\mathbf{W}\tilde{\mathbf{G}})^{-1}\tilde{\mathbf{G}}^{T}\mathbf{W}\tilde{\mathbf{h}},
	\vspace{-0.25cm}
	\end{equation}
	where the positive definite weighting matrix $\mathbf{W}$ is taken into account to minimize the Frobenius norm of the covariance matrix of $\mathbf{x}$.
	If the covariance matrix $\mathbf{Q}$ is known,
	the weighting matrix can be deduced as follows:
	\vspace{-0.15cm}
	\begin{equation}\label{31}
	\mathbf{W}=(\mathbf{B}\mathbf{Q}\mathbf{B}^T)^{-1},
	\vspace{-0.25cm}
	\end{equation}
	where $\mathbf{B}$ (see \eqref{b} in Appendix \ref{B}) is the coefficient matrix with $\mathbf{B}\Delta \mathbf{m}$ approximating the linear term of $\mathbf{e}$, i.e., $\mathbf{e} \doteq \mathbf{B}\Delta \mathbf{m}$.
	Hence, the WLS estimation is given by
	\vspace{-0.15cm}
	\begin{equation}\label{331}
	\mathbf{x} = \left(\tilde{\mathbf{G}}^{T}(\mathbf{B}\mathbf{Q}\mathbf{B}^T)^{-1}\tilde{\mathbf{G}}\right)^{-1}\tilde{\mathbf{G}}^{T}(\mathbf{B}\mathbf{Q}\mathbf{B}^T)^{-1}\tilde{\mathbf{h}}.
	\vspace{-0.25cm}
	\end{equation}
\end{Lem}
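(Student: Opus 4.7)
The plan is to derive the result in three stages: first the generic WLS formula from least-squares optimization, then the covariance of the error vector via linearization, and finally the optimal weighting matrix by a standard BLUE-style argument.

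First I would start from the pseudo-linear model \eqref{20} and form the weighted quadratic cost function $J(\mathbf{x})=(\tilde{\mathbf{h}}-\tilde{\mathbf{G}}\mathbf{x})^{T}\mathbf{W}(\tilde{\mathbf{h}}-\tilde{\mathbf{G}}\mathbf{x})$ for an arbitrary symmetric positive-definite weighting matrix $\mathbf{W}$. Differentiating with respect to $\mathbf{x}$, setting the gradient to $\mathbf{0}$, and using the invertibility of $\tilde{\mathbf{G}}^{T}\mathbf{W}\tilde{\mathbf{G}}$ (which holds whenever $N_a\geq 2$ so that $\tilde{\mathbf{G}}$ has full column rank) immediately yields the closed-form solution in \eqref{21}. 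This delivers the first claim of the lemma without needing any distributional assumption on $\mathbf{e}$.

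Next I would compute the covariance matrix of $\mathbf{x}$. Substituting $\tilde{\mathbf{h}}=\tilde{\mathbf{G}}\mathbf{x}^{\circ}+\mathbf{e}$ into \eqref{21} gives $\mathbf{x}-\mathbf{x}^{\circ}=(\tilde{\mathbf{G}}^{T}\mathbf{W}\tilde{\mathbf{G}})^{-1}\tilde{\mathbf{G}}^{T}\mathbf{W}\mathbf{e}$, so $\mathrm{Cov}(\mathbf{x})=(\tilde{\mathbf{G}}^{T}\mathbf{W}\tilde{\mathbf{G}})^{-1}\tilde{\mathbf{G}}^{T}\mathbf{W}\,\mathbb{E}\{\mathbf{e}\mathbf{e}^{T}\}\,\mathbf{W}\tilde{\mathbf{G}}(\tilde{\mathbf{G}}^{T}\mathbf{W}\tilde{\mathbf{G}})^{-1}$. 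To obtain $\mathbb{E}\{\mathbf{e}\mathbf{e}^{T}\}$ I would perform a first-order Taylor expansion of each entry of $\mathbf{e}=\tilde{\mathbf{h}}-\tilde{\mathbf{G}}\mathbf{x}^{\circ}$ about the noise-free measurement vector $\mathbf{m}^{\circ}$, dropping the second-order terms $\Delta r_{i1}^2$, $\Delta r_{i1}\Delta\phi_j$, etc. The Jacobian of this expansion is precisely the matrix $\mathbf{B}$ constructed row by row in Appendix \ref{B}, so $\mathbf{e}\doteq\mathbf{B}\Delta\mathbf{m}$ and consequently $\mathbb{E}\{\mathbf{e}\mathbf{e}^{T}\}\doteq\mathbf{B}\mathbf{Q}\mathbf{B}^{T}$, which is positive definite under the standing Gaussian assumption on $\Delta\mathbf{m}$.

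Finally I would invoke the Gauss--Markov-type argument to pin down $\mathbf{W}$. With $\mathbf{C}\triangleq\mathbf{B}\mathbf{Q}\mathbf{B}^{T}$ known, the factorization $\mathbf{C}=\mathbf{C}^{1/2}\mathbf{C}^{T/2}$ lets me write $\mathrm{Cov}(\mathbf{x})=(\tilde{\mathbf{G}}^{T}\mathbf{W}\tilde{\mathbf{G}})^{-1}\tilde{\mathbf{G}}^{T}\mathbf{W}\mathbf{C}^{1/2}\mathbf{C}^{T/2}\mathbf{W}\tilde{\mathbf{G}}(\tilde{\mathbf{G}}^{T}\mathbf{W}\tilde{\mathbf{G}})^{-1}$, and a short matrix-inequality argument (equivalently, a Cauchy--Schwarz or completing-the-square calculation in the space of symmetric matrices) shows that this expression is minimized in the Loewner ordering—and therefore in Frobenius norm of the covariance—by the choice $\mathbf{W}=\mathbf{C}^{-1}=(\mathbf{B}\mathbf{Q}\mathbf{B}^{T})^{-1}$, establishing \eqref{31}. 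Substituting this $\mathbf{W}$ into \eqref{21} yields \eqref{331} and closes the proof.

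The main obstacle is the linearization step: constructing $\mathbf{B}$ requires patient partial differentiation of the pseudo-measurement entries $(r_{n1}^{2}-2r_{n1}\mathbf{a}_{1}^{T}\mathbf{b}_{1}-\cdots)$ and the angular rows of $\mathbf{h}_n,\mathbf{G}_n$ with respect to each of $r_{i1},\dot{r}_{i1},\phi_j,\theta_j$, and one must verify that the neglected quadratic cross-terms really are $o(\|\Delta\mathbf{m}\|)$ in the small-noise regime so that the approximation $\mathbf{e}\doteq\mathbf{B}\Delta\mathbf{m}$ is justified; I would offload these bookkeeping computations to Appendix \ref{B} and cite the resulting $\mathbf{B}$ back here.
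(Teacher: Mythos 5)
Your proposal is correct and follows essentially the same route as the paper: the paper's Appendix B likewise obtains $\mathbf{W}=(\mathbb{E}\{\mathbf{e}\mathbf{e}^{T}\})^{-1}$ from the standard WLS optimality result (cited to Kay) and then spends its effort on the first-order Taylor expansion of $\mathbf{e}=(\tilde{\mathbf{h}}-\tilde{\mathbf{G}}\mathbf{x}^{\circ})-(\mathbf{h}-\mathbf{G}\mathbf{x}^{\circ})$ in $\Delta\mathbf{m}$ to identify $\mathbf{B}$ and conclude $\mathbb{E}\{\mathbf{e}\mathbf{e}^{T}\}\doteq\mathbf{B}\mathbf{Q}\mathbf{B}^{T}$. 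The only difference is cosmetic: you spell out the normal-equations derivation of \eqref{21} and the Gauss--Markov/Loewner-ordering argument that the paper simply references.
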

\begin{proof}
	See Appendix \ref{B}.
\end{proof}
The proposed algorithm is summarized in Algorithm \ref{alg1},
in which $T$ represents the total number of iterations.
The weighting matrix $\mathbf{W}$ in \eqref{31} is dependent on the unknown location $\mathbf{u}^\circ$ and velocity $\dot{\mathbf{u}}^\circ$ via the matrix $\mathbf{B}$ defined in \eqref{b}.
Hence, in line 1 of Algorithm \ref{alg1}, we initialize $\mathbf{W}$ with $\mathbf{Q}^{-1}$. Then, in lines 2 and 3, we obtain the coarse estimation $\mathbf{x}$. 	
By updating $\mathbf{W}$ according to lines 5-7, we can produce a highly accurate solution of $\mathbf{x}$ in line 8.
\vspace{-0.3cm}
\begin{algorithm}
	\caption{\textbf{: Joint Position and Velocity Estimation}}\label{alg1}
	\begin{algorithmic}[1] 
		\Require $\mathbf{Q}$, $\mathbf{b}_j$, and measurements set $\mathbb{S}\!=\!\{r_{i1},\dot{r}_{i1},\phi_j,\theta_j\}$,
	for $i\!=\!2,\ldots,N_{a}$ and $j\!=\!1,\ldots,N_{a}$.
		\Ensure $\mathbf{u}=\mathbf{x}(1:3)$, $\dot{\mathbf{u}}=\mathbf{x}(4:6)$
		\State Initialization: $t=1$, $\mathbf{W}=\mathbf{Q}^{-1}$.
		\State Calculate $\tilde{\mathbf{h}}$ and $\tilde{\mathbf{G}}$ by the given measurements in $\mathbb{S}$.
		\State Calculate the initial $\mathbf{x}$ in (\ref{21}) with $\mathbf{W}=\mathbf{Q}^{-1}$.
		\While {$t\leq T$}
		\State Calculate $r_{i1}$, $r_j$, $\dot{r}_j$, $\mathbf{c}_1^\circ$, $\mathbf{d}_1^\circ$, $\dot{\phi}_1$, $\dot{\theta}_1$ by the obtained $\mathbf{x}$, for $i\!=\!2,\ldots,N_{a}$ and $j\!=\!1,\ldots,N_{a}$, \hspace*{0.2in} according to
		 \eqref{2}, \eqref{1}, \eqref{3}, \eqref{angle} and \eqref{6}.
		\State Generate $\mathbf{B}$ in (\ref{b}) by parameters obtained in step $5$.
		\State Update weighting matrix $\mathbf{W}$ in (\ref{31}) by obtained $\mathbf{B}$.
		\State Update $\mathbf{x}$ according to (\ref{21}).
		\State $t=t+1$.
		\EndWhile
		\State \Return{$\mathbf{x}$}
	\end{algorithmic}
\end{algorithm}

\vspace{-0.4cm}
\subsection{Performance Analysis}
In this section, we evaluate the performance of the proposed estimator in (\ref{331}) by analyzing the expectation and covariance of the proposed estimator.

\textbf{$\bullet$} \textbf{Asymptotic unbiasedness} occurs when the expectation of an estimator equals to the true value of the estimated parameters.
\vspace{-0.25cm}
\begin{Theorem}\label{theorem1}
	The presented estimator $  \mathbf{x}=(\tilde{\mathbf{G}}^{T}\mathbf{W}\tilde{\mathbf{G}})^{-1}\tilde{\mathbf{G}}^{T}\mathbf{W}\tilde{\mathbf{h}}$ with $ \mathbf{W}=(\mathbf{B}\mathbf{Q}\mathbf{B}^T)^{-1}$ is asymptotically unbiased, i.e., $\mathbb{E}\{\mathbf{x}\}\doteq \mathbf{x}^{\circ}$.
\end{Theorem}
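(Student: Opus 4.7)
The plan is to plug the model $\tilde{\mathbf{h}} = \tilde{\mathbf{G}}\mathbf{x}^\circ + \mathbf{e}$ from \eqref{20} directly into the WLS estimator \eqref{331}. Since $\tilde{\mathbf{G}}^T\mathbf{W}\tilde{\mathbf{G}}$ is invertible under the small-noise regime, this substitution gives the clean identity
\begin{equation*}
\mathbf{x} = \mathbf{x}^\circ + (\tilde{\mathbf{G}}^{T}\mathbf{W}\tilde{\mathbf{G}})^{-1}\tilde{\mathbf{G}}^{T}\mathbf{W}\mathbf{e},
\end{equation*}
so that asymptotic unbiasedness reduces to showing the second term has vanishing expectation to leading order in $\Delta\mathbf{m}$.

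Next, I would invoke the two linearizations already in force in Section \ref{wls1}: (i) $\mathbf{e} \doteq \mathbf{B}\Delta\mathbf{m}$ (the first-order residual derived in Appendix \ref{B}), and (ii) $\tilde{\mathbf{G}} = \mathbf{G} + O(\Delta\mathbf{m})$, $\tilde{\mathbf{h}} = \mathbf{h} + O(\Delta\mathbf{m})$, since both are smooth functions of the measured parameters $r_{i1}$, $\dot{r}_{i1}$, $\phi_j$, $\theta_j$. A first-order Taylor expansion of the matrix $(\tilde{\mathbf{G}}^{T}\mathbf{W}\tilde{\mathbf{G}})^{-1}\tilde{\mathbf{G}}^{T}\mathbf{W}$ around its noise-free value then yields a constant leading term $(\mathbf{G}^{T}\mathbf{W}\mathbf{G})^{-1}\mathbf{G}^{T}\mathbf{W}$ plus $O(\Delta\mathbf{m})$ perturbations; multiplying by $\mathbf{e} = O(\Delta\mathbf{m})$ and keeping only first-order terms gives
\begin{equation*}
(\tilde{\mathbf{G}}^{T}\mathbf{W}\tilde{\mathbf{G}})^{-1}\tilde{\mathbf{G}}^{T}\mathbf{W}\mathbf{e} \doteq (\mathbf{G}^{T}\mathbf{W}\mathbf{G})^{-1}\mathbf{G}^{T}\mathbf{W}\mathbf{B}\Delta\mathbf{m}.
\end{equation*}
Taking expectations and using the assumption that $\Delta\mathbf{m}$ is zero-mean Gaussian with covariance $\mathbf{Q}$, the entire correction vanishes and $\mathbb{E}\{\mathbf{x}\} \doteq \mathbf{x}^\circ$ follows.

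The main obstacle is rigorously justifying the splitting of $\tilde{\mathbf{G}}$ from $\mathbf{e}$ inside the expectation, because both are functions of the same noise $\Delta\mathbf{m}$, so cross-correlations could in principle generate a bias. The resolution — and the reason the theorem is stated only as an asymptotic result signified by $\doteq$ — is that any such cross term is quadratic in $\Delta\mathbf{m}$ and therefore contributes only an $O(\|\mathbf{Q}\|)$ bias that is negligible compared with the $O(1)$ zeroth-order term $\mathbf{x}^\circ$. I would state this order-of-magnitude argument explicitly, noting that the same linearization underlies the CRLB-achieving behavior that is later verified by simulation, rather than grinding through the second-order Taylor remainders of the matrix inverse $(\tilde{\mathbf{G}}^T\mathbf{W}\tilde{\mathbf{G}})^{-1}$.
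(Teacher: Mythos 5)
Your proposal follows essentially the same route as the paper: the identical decomposition $\mathbf{x}-\mathbf{x}^\circ=(\tilde{\mathbf{G}}^{T}\mathbf{W}\tilde{\mathbf{G}})^{-1}\tilde{\mathbf{G}}^{T}\mathbf{W}\mathbf{e}$ followed by $\mathbb{E}\{\mathbf{e}\}\doteq\mathbf{B}\,\mathbb{E}\{\Delta\mathbf{m}\}=\mathbf{0}$. In fact you are slightly more careful than the paper, which silently pulls the noise-dependent matrix $(\tilde{\mathbf{G}}^{T}\mathbf{W}\tilde{\mathbf{G}})^{-1}\tilde{\mathbf{G}}^{T}\mathbf{W}$ outside the expectation; your explicit observation that the cross-correlation between $\tilde{\mathbf{G}}$ and $\mathbf{e}$ is second order in $\Delta\mathbf{m}$ is exactly the justification that step needs.
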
	
\begin{proof}
	The true value $\mathbf{x}^{\circ}$ of $\mathbf{x}$ can be expressed by $(\tilde{\mathbf{G}}^{T}\mathbf{W}\tilde{\mathbf{G}})^{-1}(\tilde{\mathbf{G}}^{T}\mathbf{W}\tilde{\mathbf{G}})\mathbf{x}^{\circ}$.
	Then, from (\ref{21}), we can obtain
	\vspace{-0.25cm}
	\begin{equation}\label{32}
	\begin{aligned}
	\Delta\mathbf{x}&=\mathbf{x}-\mathbf{x}^{\circ}
	=(\tilde{\mathbf{G}}^{T}\mathbf{W}\tilde{\mathbf{G}})^{-1}\tilde{\mathbf{G}}^{T}\mathbf{W}\mathbf{e}.
	\end{aligned}
	\vspace{-0.25cm}
	\end{equation}
	Taking the expectation in (\ref{32}) indicates that
	\vspace{-0.25cm}
	\begin{equation}
	\mathbb{E}\{\Delta\mathbf{x}\}=\mathbb{E}\{\mathbf{x}\}-\mathbf{x}^{\circ}=(\tilde{\mathbf{G}}^{T}\mathbf{W}\tilde{\mathbf{G}})^{-1}\tilde{\mathbf{G}}^{T}\mathbf{W}\mathbb{E}\{\mathbf{e}\}.
	\vspace{-0.25cm}
	\end{equation}
	Given $\mathbb{E}\{\mathbf{e}\}\doteq \mathbb{E}\{ \mathbf{B}\Delta \mathbf{m}\} = \mathbf{B} \mathbb{E}\{ \Delta \mathbf{m}\}=0$,
	we have $\mathbb{E}\{\Delta\mathbf{x}\}\doteq 0$. Therefore,  $\mathbb{E}\{\mathbf{x}\}\doteq \mathbf{x}^{\circ}$.
\end{proof}
\vspace{-0.25cm}
\begin{remark}
We have proven that the proposed estimator is asymptotically unbiased, which means that the proposed algorithm will become increasingly accurate as the number of measurements increases.
\end{remark}
\vspace{-0.25cm}
\textbf{$\bullet$} \textbf{CRLB} is the lowest possible variance that an unbiased linear estimator can achieve.
According to \cite{KAY}, the CRLB of $\mathbf{x}^{\circ}$ for the Gaussian noise model can be defined as
\vspace{-0.25cm}
\begin{equation}\label{b1}
\mbox{\mbox{CRLB}}(\mathbf{x}^{\circ})=(\mathbf{B}_1^T\mathbf{Q}^{-1}\mathbf{B}_1)^{-1},
\vspace{-0.25cm}
\end{equation}
where $\mathbf{B}_1=\partial \mathbf{m}^\circ/ \partial {\mathbf{x}^{\circ T}}$,
and $\mathbf{m}^\circ$ is defined in \eqref{mo}.
The partial derivatives are given by
	\begin{equation}\label{335}	
	\hspace{-0.25cm}
	\frac{\partial \mathbf{m}^\circ}{\partial\mathbf{x}^{\circ T}}
	\!\!=\!\!\left[\!(\frac{\partial r_{21}^{\circ }}{\partial\mathbf{x}^{\circ T}})^T,
	(\frac{\partial \dot{r}_{21}^{\circ }}{\partial\mathbf{x}^{\circ T}})^T,\ldots,
	(\frac{\partial r_{N_{a} 1}^{\circ }}{\partial\mathbf{x}^{\circ T}})^T,
	(\frac{\partial \dot{r}_{N_{a} 1}^{\circ }}{\partial\mathbf{x}^{\circ T}})^T,
	(\frac{\partial \phi_{1}^{\circ }}{\partial\mathbf{x}^{\circ T}})^T,
	(\frac{\partial \theta_{1}^{\circ }}{\partial\mathbf{x}^{\circ T}})^T,\ldots,
	(\frac{\partial \phi_{N_{a}}^{\circ }}{\partial\mathbf{x}^{\circ T}})^T,
	(\frac{\partial \theta_{N_{a}}^{\circ }}{\partial\mathbf{x}^{\circ T}})^T
	\!\right]^T\!\!\!\!,
	\vspace{-0.25cm}
	\end{equation}
and we have
\begin{equation}\label{35}
\hspace{-0.35cm}
\frac{\partial r_{i1}^{\circ }}{\partial\mathbf{x}^{\circ T}}\!=\!
\left[\frac{\partial r_{i1}^{\circ }}{\partial \mathbf{u}^{\circ T}},
\frac{\partial r_{i1}^{\circ }}{\partial \dot{\mathbf{u}}^{\circ T}}\right],
\frac{\partial \dot{r}_{i1}^{\circ }}{\partial\mathbf{x}^{\circ T}}\!=\!
\left[\frac{\partial \dot{r}_{i1}^{\circ }}{\partial \mathbf{u}^{\circ T}},
\frac{\partial \dot{r}_{i1}^{\circ }}{\partial \dot{\mathbf{u}}^{\circ T}}\right],
\frac{\partial \phi_{j}^{\circ }}{\partial\mathbf{x}^{\circ T}}\!=\!
\left[\frac{\partial \phi_{j}^{\circ }}{\partial \mathbf{u}^{\circ T}},
\frac{\partial \phi_{j}^{\circ }}{\partial \dot{\mathbf{u}}^{\circ T}}\right],
\frac{\partial \theta_{j}^{\circ }}{\partial\mathbf{x}^{\circ T}}\!=\!
\left[\frac{\partial \theta_{j}^{\circ }}{\partial \mathbf{u}^{\circ T}},
\frac{\partial \theta_{j}^{\circ }}{\partial \dot{\mathbf{u}}^{\circ T}}\right]\!\!,
\end{equation}
where $i=2,\ldots,N_{a}$ and $j=1,\ldots,N_{a}$.
According to the detailed derivations in Appendix \ref{D}, the elements of matrix $\mathbf{B}_1$ in (\ref{b1}) are given by
\vspace{-0.25cm}
\begin{equation}
\hspace{-6.5cm} \mathbf{B}_1(2i\!-\!3,:)\!=\!\left[(\mathbf{u}^{\circ}-\mathbf{b}_{i})^{T}/r_{i}^{\circ}-(\mathbf{u}^{\circ}-\mathbf{b}_{1})^{T}/r_{1}^{\circ}, \mathbf{0}\right],\nonumber
\end{equation}
\begin{equation}
\mathbf{B}_1(2i\!-\!2,:)\!=\!\bigg[\frac{\dot{r}_{1}^{\circ}(\mathbf{u}^{\circ}\!-\!\mathbf{b}_{1})^{T}}{(r_{1}^{\circ})^{2}}
\!-\frac{\dot{r}_{i}^{\circ}(\mathbf{u}^{\circ}\!-\!\mathbf{b}_{i})^{T}}{(r_{i}^{\circ})^{2}}\!+\!\frac{\dot{\mathbf{u}}^{\circ T}}{r_{i}^{\circ}}\!-\!\frac{\dot{\mathbf{u}}^{\circ T}}{r_{1}^{\circ}}, (\mathbf{u}^{\circ}-\mathbf{b}_{i})^{T}/r_{i}^{\circ}\!-\!(\mathbf{u}^{\circ}\!-\!\mathbf{b}_{1})^{T}/r_{1}^{\circ}\!\bigg],\nonumber
\end{equation}
\begin{equation}\label{99}
\hspace{-1.08cm}
\mathbf{B}_1(2N_{a}-3+2j,:)=\left[\mathbf{c}_{j}^{\circ T}/(r_{j}^{\circ}\cos\theta_{j}^{\circ}),\mathbf{0}\right],\ \ \ \ \mathbf{B}_1(2N_{a}-2+2j,:)=\left[\mathbf{d}_{j}^{\circ T}/r_{j}^{\circ},\mathbf{0}\right],
\vspace{-0.25cm}
\end{equation}
where $i=2,\ldots,N_{a}$ and $j=1,\ldots,N_{a}$.

A relevant question is whether the proposed estimator can achieve the CRLB under mild noise conditions.
For this reason, we will compare the covariance of our proposed estimator with the CRLB given in \eqref{b1}.
The covariance matrix of the estimator $\mathbf{x}$ is given by
\vspace{-0.15cm}
\begin{equation}\label{33}
\mbox{cov}(\mathbf{x})=\mathbb{E}\left\{(\mathbf{x}-\mathbb{E}\{\mathbf{x}\})(\mathbf{x}-\mathbb{E}\{\mathbf{x}\})^T\right\}.
\vspace{-0.25cm}
\end{equation}
Given $\mathbb{E}\{\mathbf{x}\} \doteq \mathbf{x}^{\circ}$,
we have
\vspace{-0.15cm}
\begin{equation}\label{313}
\mbox{cov}(\mathbf{x})\doteq \mathbb{E}\left\{(\mathbf{x}-\mathbf{x}^{\circ})(\mathbf{x}-\mathbf{x}^{\circ})^T\right\}
=\mathbb{E}\left\{\Delta\mathbf{x}\Delta\mathbf{x}^T\right\}.
\end{equation}
According to \eqref{32}, we have
\vspace{-0.15cm}
\begin{equation}\label{332}
\begin{aligned}
\mbox{cov}(\mathbf{x})
&\doteq \mathbb{E}\left\{(\tilde{\mathbf{G}}^{T}\mathbf{W}\tilde{\mathbf{G}})^{-1}\tilde{\mathbf{G}}^{T}\mathbf{W}(\mathbf{e}\mathbf{e}^T)\mathbf{W}^T\tilde{\mathbf{G}}(\tilde{\mathbf{G}}^{T}\mathbf{W}^{T}\tilde{\mathbf{G}})^{-1} \right\}\\
&\doteq (\mathbf{G}^{T}\mathbf{W}\mathbf{G})^{-1}\mathbf{G}^{T}\mathbf{W}\mathbb{E}\!\left\{\mathbf{e}\mathbf{e}^T\!\right\}\mathbf{W}^T\mathbf{G}(\mathbf{G}^{T}\mathbf{W}^{T}\mathbf{G})^{-1}.
\end{aligned}
\vspace{-0.25cm}
\end{equation}
In addition, given $\mathbf{W}^{-1} = \mathbb{E}\!\left\{\mathbf{e}\mathbf{e}^T\!\right\}$ (see Appendix \ref{B}), we can obtain
\vspace{-0.25cm}
\begin{equation}\label{333}
\mbox{cov}(\mathbf{x})
\doteq(\mathbf{G}^{T}\mathbf{W}\mathbf{G})^{-1}.
\vspace{-0.25cm}
\end{equation}
Substituting $\mathbf{W}$ in (\ref{31}) into (\ref{333}) implies
\vspace{-0.25cm}
\begin{equation}
\mbox{cov}(\mathbf{x})\doteq (\mathbf{G}^{T}(\mathbf{B}\mathbf{Q}\mathbf{B}^T)^{-1}\mathbf{G})^{-1}=\left((\mathbf{B}^{-1}\mathbf{G})^T\mathbf{Q}^{-1}(\mathbf{B}^{-1}\mathbf{G})\right)^{-1}.
\vspace{-0.25cm}
\end{equation}
By denoting $\mathbf{B}_0=\mathbf{B}^{-1}\mathbf{G}$, we have
\vspace{-0.25cm}
\begin{equation}\label{34}
\mbox{cov}(\mathbf{x})\doteq (\mathbf{B}_0^T\mathbf{Q}^{-1}\mathbf{B}_0)^{-1}.
\vspace{-0.25cm}
\end{equation}
By comparing the CRLB expressed in \eqref{b1} with the covariance of the proposed estimator given in \eqref{34}, we can derive the following theorem:
\vspace{-0.25cm}
\begin{Theorem}\label{theorem2}
	The proposed estimator in (\ref{331}) attains the lowest possible variance of CRLB under low Gaussian noise, i.e., $\mbox{\rm cov} (\mathbf{x})\doteq\mbox{\rm CRLB}(\mathbf{x}^{\circ})$, when the weighting matrix in (\ref{31}) is used.
\end{Theorem}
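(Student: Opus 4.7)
The plan is to verify Theorem~\ref{theorem2} by establishing the single identity $\mathbf{B}_0 = \mathbf{B}_1$, i.e., $\mathbf{G} = \mathbf{B}\mathbf{B}_1$. Once this identity is in hand, equation \eqref{34} reads $\mbox{cov}(\mathbf{x}) \doteq (\mathbf{B}_0^T\mathbf{Q}^{-1}\mathbf{B}_0)^{-1} = (\mathbf{B}_1^T\mathbf{Q}^{-1}\mathbf{B}_1)^{-1} = \mbox{CRLB}(\mathbf{x}^\circ)$ by direct substitution into \eqref{b1}, which proves the theorem. So the entire task reduces to showing $\mathbf{G}=\mathbf{B}\mathbf{B}_1$ at the true parameter values, in the small-noise regime.

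The cleanest route to $\mathbf{G}=\mathbf{B}\mathbf{B}_1$ is implicit differentiation of the noise-free model \eqref{191}. Because $\mathbf{h}$ and $\mathbf{G}$ are constructed entirely from the noise-free measurement vector $\mathbf{m}^\circ$, and $\mathbf{m}^\circ$ is itself a deterministic function of $\mathbf{x}^\circ$ (with the known anchors $\mathbf{b}_n$ fixed), the relation $\mathbf{h}(\mathbf{m}^\circ(\mathbf{x})) - \mathbf{G}(\mathbf{m}^\circ(\mathbf{x}))\mathbf{x} \equiv \mathbf{0}$ is an identity in $\mathbf{x}$. Applying $\partial/\partial\mathbf{x}^T$ (using the product rule on $\mathbf{G}\mathbf{x}$ and the chain rule through $\mathbf{m}^\circ$) and evaluating at $\mathbf{x}=\mathbf{x}^\circ$ yields
\[
\left[\frac{\partial \mathbf{h}}{\partial \mathbf{m}^{\circ T}} - \frac{\partial(\mathbf{G}\mathbf{x}^\circ)}{\partial \mathbf{m}^{\circ T}}\right]\mathbf{B}_1 \;-\; \mathbf{G} \;=\; \mathbf{0}.
\]
The bracketed Jacobian is precisely $\mathbf{B}$: the first-order Taylor expansion of $\mathbf{e}=\tilde{\mathbf{h}}-\tilde{\mathbf{G}}\mathbf{x}^\circ$ around $\mathbf{m}^\circ$ in the small-noise regime gives exactly that Jacobian times $\Delta\mathbf{m}$, so the defining approximation $\mathbf{e}\doteq\mathbf{B}\Delta\mathbf{m}$ of Lemma~\ref{lemma} (see \eqref{b} in Appendix~\ref{B}) identifies the bracketed matrix with $\mathbf{B}$. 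Hence $\mathbf{B}\mathbf{B}_1=\mathbf{G}$, so $\mathbf{B}_0=\mathbf{B}^{-1}\mathbf{G}=\mathbf{B}_1$, and the theorem follows.

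The main obstacle I anticipate is not algebraic difficulty but keeping the chain rule honest when $\mathbf{G}(\mathbf{m}^\circ)$ itself varies with $\mathbf{x}$: the derivative $\partial(\mathbf{G}\mathbf{x})/\partial\mathbf{x}^T$ must include both a direct term $\mathbf{G}$ and an indirect term through $\mathbf{m}^\circ$, and it is the indirect term that cancels against $\partial\mathbf{h}/\partial\mathbf{x}^T$ to leave $\mathbf{B}\mathbf{B}_1$ on one side. A purely algebraic alternative, if the implicit-differentiation argument feels insufficiently rigorous, is a block-row verification of $\mathbf{B}\mathbf{B}_1=\mathbf{G}$: partition $\mathbf{B}_1$ according to \eqref{99} and $\mathbf{B}$, $\mathbf{G}$ according to their per-RRH blocks $\mathbf{P}_n,\mathbf{G}_n$ from \eqref{191}, and match TDoA, FDoA, and AoA rows in turn using the trigonometric identities in \eqref{angle} and the definitions of $\mathbf{a}_n^\circ,\mathbf{c}_n^\circ,\mathbf{d}_n^\circ$. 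Both routes produce the same conclusion, and the $\doteq$ in the statement simply reflects the first-order-in-noise nature of the linearizations already used to derive \eqref{34}.
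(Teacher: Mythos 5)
Your proposal is correct, and the reduction is the same as the paper's: Appendix E also observes that $\mbox{cov}(\mathbf{x})\doteq\mbox{CRLB}(\mathbf{x}^{\circ})$ holds if and only if $\mathbf{B}\mathbf{B}_1=\mathbf{G}$. Where you differ is in how that identity is established. The paper takes your ``purely algebraic alternative'': it verifies $\mathbf{B}\mathbf{B}_1=\mathbf{G}$ by direct block-row multiplication, which it reduces to two nontrivial vector identities --- one for the TDoA rows, $r_i^{\circ}\bigl[(\mathbf{u}^{\circ}-\mathbf{b}_i)^T/r_i^{\circ}-(\mathbf{u}^{\circ}-\mathbf{b}_1)^T/r_1^{\circ}\bigr]=(\mathbf{b}_1-\mathbf{b}_i)^T-r_{i1}^{\circ}\mathbf{a}_1^{\circ T}$, and a longer one for the FDoA rows --- each proved from the geometric relations $\mathbf{u}^{\circ}-\mathbf{b}_1=r_1^{\circ}\mathbf{a}_1^{\circ}$, $\dot{\mathbf{u}}^{\circ}=\dot{r}_1^{\circ}\mathbf{a}_1^{\circ}+r_1^{\circ}\dot{\mathbf{a}}_1^{\circ}$, and $\dot{\mathbf{a}}_1^{\circ T}=\dot{\phi}_1^{\circ}\,\partial\mathbf{a}_1^{\circ T}/\partial\phi_1^{\circ}+\dot{\theta}_1^{\circ}\,\partial\mathbf{a}_1^{\circ T}/\partial\theta_1^{\circ}$ (the AoA rows are immediate). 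Your primary route via implicit differentiation of $F(\mathbf{m}^{\circ}(\mathbf{x}),\mathbf{x})=\mathbf{h}-\mathbf{G}\mathbf{x}\equiv\mathbf{0}$ is sound and, to my mind, more illuminating: it exposes $\mathbf{B}\mathbf{B}_1=\mathbf{G}$ as a structural consequence of the pseudo-linear construction (since $\mathbf{B}=\partial F/\partial\mathbf{m}^T$, $\mathbf{B}_1=\partial\mathbf{m}^{\circ}/\partial\mathbf{x}^{T}$, and $\partial F/\partial\mathbf{x}^T=-\mathbf{G}$ at fixed $\mathbf{m}$), rather than a coincidence to be checked entry by entry, and it would generalize to any model built the same way. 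The two points you must make explicit for rigor are (i) that $\mathbf{h}=\mathbf{G}\mathbf{x}$ holds identically in a neighborhood of $\mathbf{x}^{\circ}$, not merely at the true point --- which follows because the derivation in Appendix A uses only geometric relations valid for arbitrary $(\mathbf{u},\dot{\mathbf{u}})$ --- and (ii) that the paper's $\mathbf{B}$ in \eqref{b}, after the simplifications that invoke consistency of $\mathbf{m}^{\circ}$ with $\mathbf{x}^{\circ}$, coincides with the Jacobian $\partial F/\partial\mathbf{m}^T$ evaluated at that consistent point. Both hold here, so your argument goes through; the paper's brute-force check buys self-containedness at the cost of the two tedious identities, while yours buys generality and brevity at the cost of these two hypotheses needing explicit justification.
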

\begin{proof}
See Appendix \ref{E}.	
\end{proof}
\vspace{-0.25cm}
\begin{remark}
Uplink localization in the mmWave can be executed in the initial access stage, as explained in Section \ref{ra}.
Uplink localization can also be collaterally executed in the uplink channel estimation stage in which the estimated channel parameters are used to obtain the position and velocity information of the UE. Then, the location information can be used to predict the state of the UE at the next epoch to facilitate the initial access, beam training, and channel tracking processes.
\end{remark}

\vspace{-0.5cm}
\section{Environment Mapping}
In this section, we present an effective environment mapping method.
As mentioned previously, for ease of representation, we assume that
(1) an unknown scatterer exists between the $n$-th RRH and the UE, where $n=1,2,\ldots,N$,
and (2) the scatterer contributes to a single-bounce NLoS path.
For the $n$-th RRH, we define a vector of measured parameters $\mathbf{m}_n^{s}=[r_{n1}^{s}, \phi_n^{s}, \theta_n^{s}]^T$ that correspond to the $n$-th scatterer $\mathbf{s}_n^\circ$, where $r_{n1}^{s}$ is the measurement of the TDoA-related parameter defined in \eqref{sc1}, and $\phi_n^{s}$ and $\theta_n^{s}$ are the measurements of the AoA-related parameters defined in \eqref{55}.
Similar to \eqref{m0}, we define $\mathbf{m}_n^{s\circ}=[r_{n1}^{s\circ}, \phi_n^{s\circ}, \theta_n^{s\circ}]^T$ as the vector of the true parameters.
Then, we have
\vspace{-0.25cm}
\begin{equation}
\mathbf{m}_n^{s} = \mathbf{m}_n^{s\circ} + \Delta\mathbf{m}_n^{s},
\vspace{-0.25cm}
\end{equation}
where $\Delta\mathbf{m}_n^{s}=[\Delta r_{n1}^{s}, \Delta\phi_n^{s}, \Delta\theta_n^{s}]^T$ denotes the vector of the noise terms.
We assume that $\Delta\mathbf{m}_n^{s}$ is a zero-mean Gaussian vector with covariance matrix $\mathbf{Q}^s_n$.
According to \eqref{sc1},  we have
\vspace{-0.25cm}
\begin{equation} \label{sc11}
r_{n1}^{s\circ} \! = \! ||\mathbf{u}^\circ\!-\!\mathbf{s}_n^\circ|| \! + \! ||\mathbf{s}_n^\circ\!-\!\mathbf{b}_n||\! -\! ||\mathbf{u}^\circ\!-\!\mathbf{b}_1||.
\end{equation}
Let $d_{n1}^\circ = ||\mathbf{s}_n^\circ-\mathbf{b}_n||$ and
$d_{n2}^\circ = ||\mathbf{u}^\circ-\mathbf{s}_n^\circ||$. Then,
we have
\begin{equation} \label{s1}
r_{n1}^{s\circ} = d_{n1}^\circ + d_{n2}^\circ - r_1^\circ.
\end{equation}
By denoting $J^\circ=r_{n1}^{s\circ} + r_1^\circ$, we have $d_{n2}^\circ = J^\circ-d_{n1}^\circ$.
By squaring both sides, we have
\begin{equation} \label{s4}
(J^\circ)^2 + (d_{n1}^\circ)^2 - 2 J^\circ d_{n1}^\circ =(d_{n2}^\circ)^2.
\vspace{-0.25cm}
\end{equation}
We simplify \eqref{s4} to obtain
\vspace{-0.25cm}
\begin{equation} \label{s5}
\begin{split}
(J^\circ)^2 \!-\! 2 J^\circ d_{n1}^\circ
= & {\mathbf{u}^\circ}^T\mathbf{u}^\circ-2{\mathbf{u}^\circ}^T\mathbf{s}_n^\circ + 2 {\mathbf{b}_n}^T\mathbf{s}_n^\circ - {\mathbf{b}_n}^T\mathbf{b}_n.
\end{split}
\vspace{-0.25cm}
\end{equation}
We also denote
\begin{equation} \label{s6}
\begin{split}
&\mathbf{a}_n^{s\circ} = [\cos\theta_n^{s\circ}\cos\phi_n^{s\circ}, \cos\theta_n^{s\circ}\sin\phi_n^{s\circ}, \sin\theta_n^{s\circ}]^T,\ \ \ \ \ \ \ \ \mathbf{c}_n^{s\circ}=[-\sin\phi_n^{s\circ}, \cos\phi_n^{s\circ}, 0]^T,\\
&\mathbf{d}_n^{s\circ}=[-\sin\theta_n^{s\circ}\cos\phi_n^{s\circ}, -\sin\theta_n^{s\circ}\sin\phi_n^{s\circ}, \cos\theta_n^{s\circ}]^T,
\end{split}
\end{equation}
where ${\mathbf{a}_n^{s\circ}}^T\mathbf{a}_n^{s\circ}=1$
and $\mathbf{s}_n^\circ-\mathbf{b}_n=d_{n1}^\circ\mathbf{a}_n^{s\circ}$.
Similar to those in Appendix \ref{A}, we derive
\vspace{-0.25cm}
\begin{equation} \label{s10}
\begin{split}
(J^\circ)^2 + 2 J^\circ{\mathbf{a}_n^{s\circ}}^T\mathbf{b}_n - {\mathbf{u}^\circ}^T\mathbf{u}^\circ+ {\mathbf{b}_n^T}\mathbf{b}_n &= 2({\mathbf{b}_n}-{\mathbf{u}^\circ}+J^\circ {\mathbf{a}_n^{s\circ}})^T\mathbf{s}_n^\circ,
\end{split}
\vspace{-0.25cm}
\end{equation}
and
\vspace{-0.25cm}
\begin{equation} \label{s11}
{\mathbf{c}_n^{s\circ}}^T\mathbf{b}_n ={\mathbf{c}_n^{s\circ}}^T\mathbf{s}_n^\circ, \ \ \
{\mathbf{d}_n^{s\circ}}^T\mathbf{b}_n ={\mathbf{d}_n^{s\circ}}^T\mathbf{s}_n^\circ.
\vspace{-0.25cm}
\end{equation}
By stacking \eqref{s10} and \eqref{s11},
we derive the environment mapping model, which is given by
\vspace{-0.25cm}
\begin{equation} \label{s12}
\mathbf{h}^{s}_{n}=\mathbf{G}^{s}_{n}\mathbf{s}_n^\circ,
\vspace{-0.25cm}
\end{equation}
where
\begin{eqnarray} \label{s13}
\mathbf{h}^{s}_{n} = \begin{pmatrix}
(J^\circ)^2 + 2 J^\circ{\mathbf{a}_n^{s\circ}}^T\mathbf{b}_n - {\mathbf{u}^\circ}^T\mathbf{u}^\circ + {\mathbf{b}_n}^T\!\mathbf{b}_n \\
{\mathbf{c}_n^{s\circ}}^T\mathbf{b}_n\\
{\mathbf{d}_n^{s\circ}}^T\mathbf{b}_n
\end{pmatrix}, \ \ \mathbf{G}^{s}_{n} = \begin{pmatrix}
2({\mathbf{b}_n}-{\mathbf{u}^\circ}+J^\circ {\mathbf{a}_n^{s\circ}})^T\\
{\mathbf{c}_n^{s\circ}}^T\\
{\mathbf{d}_n^{s\circ}}^T
\end{pmatrix}.
\end{eqnarray}
Let the measurements $\{r_{n1}^{s}, \phi_n^{s},  \theta_n^{s}\}$ replace the true parameters $\{r_{n1}^{s\circ}, \phi_n^{s\circ}, \theta_n^{s\circ}\}$ in \eqref{s12}, then, we have
\vspace{-0.1cm}
\begin{equation} \label{s14}
\tilde{\mathbf{h}}^{s}_{n}=\tilde{\mathbf{G}}^{s}_{n}\mathbf{s}_n^\circ + \mathbf{e}_n^s,
\vspace{-0.2cm}
\end{equation}
where $\mathbf{e}_n^s$ is the error vector caused by the measurement error.

Subtracting \eqref{s12} from \eqref{s14} obtains $\mathbf{e}_n^s$.
By approximating $\mathbf{e}_n^s$ up to the linear noise term, we have
$
\mathbf{e}_n^s=[e_1, e_2, e_3]^T \doteq \mathbf{B}_n^s\Delta \mathbf{m}_n^s,
$
where
\vspace{-0.1cm}
\begin{equation}\label{s16}
\mathbf{B}_n^s=\mbox{diag}(
2d_{n2}^\circ,
d_{n1}^\circ \cos\theta_n^{s\circ} ,
d_{n1}^\circ).
\vspace{-0.2cm}
\end{equation}
The derivations of \eqref{s16} are similar to those in Appendix \ref{B},
but we omit in this paper the details due to lack of space.
Similarly, according to Lemma 1, the environment mapping estimator is given by
\begin{equation}\label{s17}
\mathbf{s}_n=
({\mathbf{\tilde{G}}^{sT}_n}\mathbf{W}^s_n{\mathbf{\tilde{G}}^s_n})^{-1}
{\mathbf{\tilde{G}}^{sT}_n}
\mathbf{W}^s_n
\mathbf{\tilde{h}}^s_n
\vspace{-0.25cm}
\end{equation}
with
\vspace{-0.15cm}
\begin{equation}\label{s0}
\mathbf{W}^s_n=(\mathbf{B}^s_n\mathbf{Q}^s_n{\mathbf{B}^s_n}^T)^{-1}.
\vspace{-0.25cm}
\end{equation}
We summarize the proposed environment mapping algorithm in Algorithm \ref{alg2}.
\begin{algorithm}
	\caption{\textbf{: Environment Mapping}}\label{alg2}
	\begin{algorithmic}[1]
		\Require $\mathbf{Q}^s_n$, $\mathbf{b}_n$, and measurements set $\mathbb{S}_n^s=\{r_{n1}^{s}, \phi_n^{s},  \theta_n^{s}\}$, for $n=1,\ldots,N$.
		\Ensure $\mathbf{s}_n$
		\State Initialization: $t=1$, $\mathbf{W}^s_n={\mathbf{Q}^s_n}^{-1}$.
		\State Calculate $\tilde{\mathbf{h}}^s_n$ and $\tilde{\mathbf{G}}^s_n$ by the measurements in $\mathbb{S}_n^s$ and the UE position $\mathbf{u}^\circ$ obtained in Algorithm \ref{alg1}.
		\State Calculate the initial $\mathbf{s}_n$ in \eqref{s17} with $\mathbf{W}^s_n={\mathbf{Q}^s_n}^{-1}$.
		\While {$t\leq T$}
		\State Calculate $d_{n1}^\circ$ and $d_{n2}^\circ$ by the obtained $\mathbf{s}_n$.
		\State Generate $\mathbf{B}^s_n$ by parameters obtained in step $5$.
		\State Update weighting matrix $\mathbf{W}^s_n$ in \eqref{s0} by $\mathbf{B}^s_n$.
		\State Update $\mathbf{s}_n$ according to \eqref{s17}.
		\State $t=t+1$.
		\EndWhile
		\State \Return{$\mathbf{s}_n$}
	\end{algorithmic}
\end{algorithm}

\vspace{-0.5cm}
\section{Neural Network-Assisted WLS Algorithm}
The WLS-based joint estimation and mapping algorithms proposed in Sections \uppercase\expandafter{\romannumeral3} and \uppercase\expandafter{\romannumeral4} are proven asymptotically unbiased and effective in achieving the CRLB under mild noise conditions.
The general assumption is that the vector of the noise terms follow a Gaussian distribution, and the covariance matrix is $\mathbf{Q}$ (or $\mathbf{Q}_n^s$).
This assumption means that the proposed WLS-based joint estimation and mapping algorithms have superior performance and wide versatility.

However, the performance of the proposed WLS algorithms can be further improved under some practical conditions.
In particular, this study is motivated by the observation of the ray-tracing dataset provided in
\cite{deepmimo,raytracing} that the measurement errors are not completely random and that an underlying relationship exists between them.
Hence, by utilizing the powerful learning ability of the neural networks, the underlying relationship can be learned to further improve the localization performance of the proposed WLS algorithms.

In this section,
we will design a WLS-Net that embeds the neural networks into the proposed WLS estimators in \eqref{21} and \eqref{s17},
thus improving the performance of the joint estimation and mapping algorithms. Different from the traditional neural network methods that can directly learn position and velocity,
the neural network in our approach is used to learn the residual vectors $\mathbf{e}$ in \eqref{20} and $\mathbf{e}_n^s$ in \eqref{s14} instead of learning $\mathbf{x}^{\circ}$ and $\mathbf{s}_n^\circ$, respectively.
Then, the estimated $\mathbf{e}$ and $\mathbf{e}_n^s$ are used to construct the weighting matrices $\mathbf{W}$ and $\mathbf{W}_n^s$ in \eqref{21} and \eqref{s17} and estimate $\mathbf{x}^{\circ}$ and $\mathbf{s}_n^\circ$, respectively.
By learning the residual vectors, we can derive more accurate results than directly learning position and velocity.
We will also apply ensemble learning to further improve the performance of the proposed method.
\begin{figure*}[htbp]
	\centering
	\includegraphics[scale=0.63,angle=0]{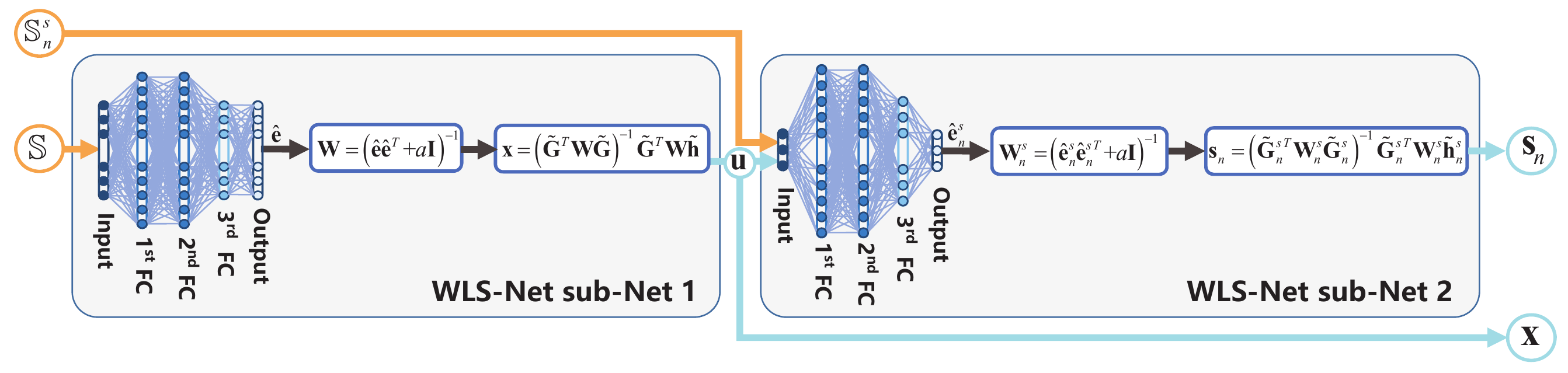}
	\captionsetup{font=footnotesize}
	\caption{Structure of the neural network-assisted WLS algorithm with  two sub-nets.}
	\label{WLSNET}
\end{figure*}

\vspace{-0.25cm}
\subsection{WLS-Net}\label{wlsnet}
The structure of the proposed WLS-Net is illustrated in Fig. \ref{WLSNET}, which is a revised version of the WLS algorithm derived by adding the learnable vectors $\mathbf{e}$ and $\mathbf{e}_n^s$.
Sub-Nets 1 and 2 of the WLS-Net have similar structures. The former estimates the position and velocity of the UE, whereas the latter estimates the position of the scatterer.
We give a general introduction here by taking sub-Net 1 of the WLS-Net as an example.
According to \cite{KAY}, the weighting matrix that minimizes the Frobenius norm of the covariance matrix of $\mathbf{x}$ is given by $\mathbf{W}=(\mathbb{E}\{\mathbf{e}\mathbf{e}^{T}\})^{-1}$.
In the WLS algorithm proposed in the previous sections, the vector $\mathbf{e}$ is approximated by the linear term. Hence, as the noise level increases, the approximation error will become larger, which will deteriorate algorithm performance.
Therefore, we propose the WLS-Net in which we learn the vector $\mathbf{e}$ by a neural network.
The input of the neural network is a set of measurements  $\mathbb{S}$ and the output of the neural network is the estimated residual vector $\hat{\mathbf{e}}$.
The proposed WLS-Net differs from the traditional neural network methods in that the latter directly outputs the estimate of $\mathbf{x}$.
Then, the estimated $\hat{\mathbf{e}}$ in the WLS-Net is used to construct $\mathbf{W}$ by
\begin{equation}
 \mathbf{W}=(\hat{\mathbf{e}}\hat{\mathbf{e}}^{T}+a\mathbf{I})^{-1},
\end{equation}
  where $a$ is a very small disturbance value to ensure the inverse of $(\hat{\mathbf{e}}\hat{\mathbf{e}}^{T}+a\mathbf{I})$ will exist.
Finally, we obtain the estimate by the model $\mathbf{x}=(\tilde{\mathbf{G}}^{T}\mathbf{W}\tilde{\mathbf{G}})^{-1}\tilde{\mathbf{G}}^{T}\mathbf{W}\tilde{\mathbf{h}}$.

As shown in Fig. \ref{WLSNET}, the input of the WLS-Net sub-Net 1 is a set of measurements given by
\begin{equation}\label{set1}
\mathbb{S}=\{r_{i1},\dot{r}_{i1},\phi_j,\theta_j\},
\end{equation}
where $i=2,\ldots,N_{a}$ and $j=1,\ldots,N_{a}$.
For different $N_{a}$, the neural network will have varying numbers of neurons. Here, we set $N_{a}=6$ as an example.
Sub-Net 1 consists of a three-layer fully connected deep neural network (FC-DNN).
As shown in Fig. \ref{WLSNET},
the input of the neural network is a 22-dimensional ($4N_{a}-2=22$) real-valued normalized measurements vector with the value of the element in $[0,1]$.
The first-two FC layers use 32 neurons, and the third FC layer uses 22 neurons.
As for the rectified linear unit (ReLU), $\mbox{ReLU}(x)=\max(x,0)$ is used as the activation function.
The final layer is the output layer used to generate the final estimation of $\mathbf{e}$, which is denoted as $\hat{\mathbf{e}}$. The sigmoid function
$\sigma(x)=1/(1+e^{-x})$ is used as the activation function in the final layer because the output is the normalized vector which has elements scaled within the $[0,1]$ range.
The set of parameters is updated by the ADAM algorithm.
The loss function is the mean square error (MSE), which is given by
\begin{equation}\label{loss}
L(\Theta)=1/T_a\sum_{i=1}^{T_a}\lVert \hat{\mathbf{e}}_i - \mathbf{e}_i \rVert^2,
\end{equation}
where $T_a$ is the total number of samples in the training set.

Similarly, the input of the WLS-Net sub-Net 2 is a set of measurements that are given by
$\mathbb{S}_n^s=\{r_{n1}^{s}, \phi_n^{s}, \theta_n^{s}\}$,
where $n=1,\ldots,N$.
The location vector $\mathbf{u}$ of the UE obtained in the WLS-Net sub-Net 1 is also fed into sub-Net 2, which is viewed as a known vector that can assist the construction of the environment mapping model.
Sub-Net 2 has the same FC-DNN architecture as sub-Net 1, except that the input and the output layers have three neurons.
The output is the estimation of $\mathbf{e}_n^s$ and denoted as $\hat{\mathbf{e}}_n^s$.
The WLS-Net sub-Net 2 will be executed $N$ times to compute the location of the $N$ scatterers.
\begin{remark}
	The proposed WLS-Net combines the neural networks with the geometric model, thus inheriting the powerful computing ability of neural networks and the robustness of models. The advantages in particular are as follows. First, the neural networks can provide a more accurate estimation of $\mathbf{e}$ and $\mathbf{e}_n^s$ than the first-order approximation in the previously proposed WLS algorithms. Hence, in some practical scenarios, the WLS-Net will achieve good performance. Moreover, the WLS-Net can be executed even without knowing the covariance matrix $\mathbf{Q}$ and $\mathbf{Q}_n^s$, whereas the $\mathbf{Q}$ and $\mathbf{Q}_n^s$ in the WLS algorithms are assumed to be known to initialize the weighting matrix $\mathbf{W}$ and $\mathbf{W}^s_n$, respectively.
	Finally, the WLS algorithm requires iterations, which implies slow reconstruction, whereas the WLS-Net does not need any iterations, thereby reducing the required time resources.
	The WLS algorithm also has its own irreplaceable advantages in that large amounts of training data are not needed and the versatility can be further enhanced. Therefore, each of these two methods can be selected depending on actual computational resources and buffer restrictions.
\end{remark}

\vspace{-0.25cm}
\subsection{Ensemble Learning-Based WLS-Net}\label{ewlsnet}
Training the WLS-Net with the loss function defined in \eqref{loss} despite the sufficient data still cannot guarantee that the WLS-Net will output the globally optimal estimator.
According to \cite{ensemble}, the ensemble learning methods often outperform a single learner.
Ensemble methods are learning algorithms that construct a set of learners and generate a new prediction by taking a vote, which may be weighted, of the predictions.
In the backpropagation algorithm for training the neural networks, the initial weights of the networks are set randomly.
If the algorithm is applied to the same training dataset but with different initial weights, then the resulting predictions may differ.
Neural networks independently trained with the same training dataset have high probabilities of not making the same prediction error. Therefore, we can further improve the performance of the neural network-assisted WLS algorithm by proposing an ensemble learning-based WLS-Net, also named eWLS-Net, which is an ensemble of $L$-independently trained WLS-Nets.

\vspace{-0.5cm}
\begin{figure}[htbp]
	\centering
	\includegraphics[scale=0.575,angle=0]{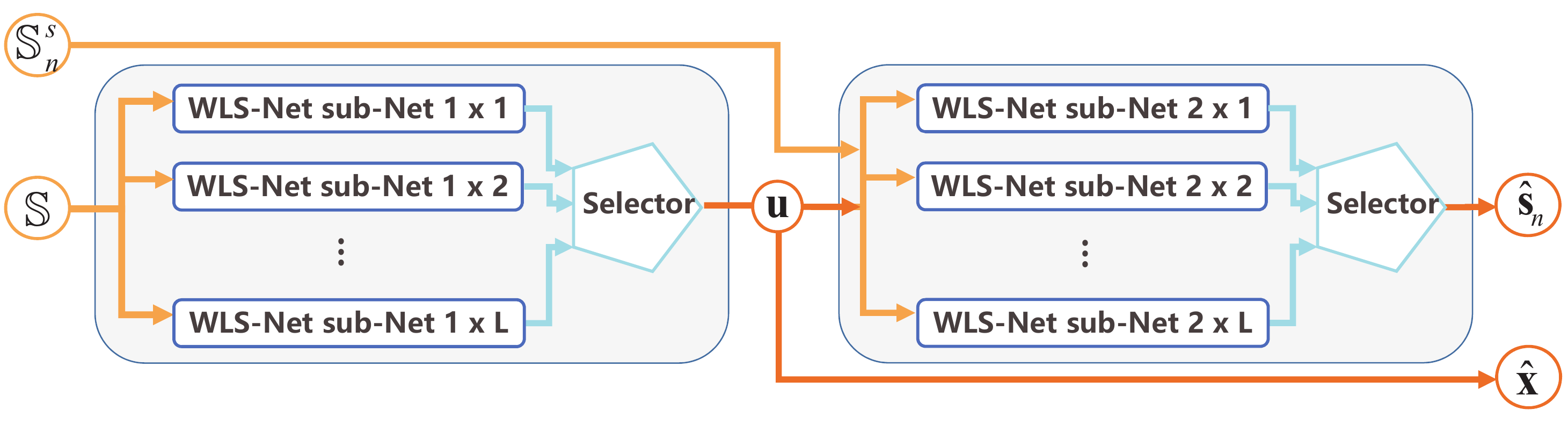}
	\captionsetup{font=footnotesize}
	\caption{Structure of the ensemble learning-based WLS-Net.}
	\label{eWLSNET}
\end{figure}
\vspace{0.3cm}
The structure of the eWLS-Net is illustrated in Fig. \ref{eWLSNET}.
Take the ensemble of WLS-Net sub-Net 1 as an example. The ensemble consists of $L$-independently trained WLS-Net sub-Net 1 and a selector.
The input of the ensemble is a set of measurements given in \eqref{set1}, and the output is the estimated $\hat{\mathbf{x}}$.
The core part of the eWLS-Net is the selector, which determines the voting mechanism.
We need to choose different selectors \cite{ensemble,enet} to solve different practical problems.
Here, the output of each WLS-Net sub-Net 1 is an independent prediction of $\mathbf{x}$, where $\mathbf{x}=(\mathbf{u}^{T},\dot{\mathbf{u}}^{T})^T$.
Hence, we obtain $L$ independent predictions of 3-D position coordinates $\mathbf{u}$ and velocity coordinates $\dot{\mathbf{u}}$.
We place $L$ predictions of $\mathbf{u}$ in the 3-D coordinate system.
The accurate predictions of $\mathbf{u}$ are clustered together, and the wrong predictions are positioned far apart, which is the same approach for $\dot{\mathbf{u}}$.
Therefore, we implement the selector in this study by the subtractive clustering method, which can find the clustering centers based on a density measure.
Subtractive clustering is operated separately for $\mathbf{u}$ and $\dot{\mathbf{u}}$.
Take $\mathbf{u}$ for example. The $L$ predictions of $\mathbf{u}$ are viewed as $L$ points and considered the candidates for cluster centers.
The density measurement obtained at point $\mathbf{u}_i$ for $i=1,\ldots,L$ is defined as
\begin{equation}\label{den}
D_i=\sum\limits_{j=1}^{L}\exp\left(- \lVert \mathbf{u}_i - \mathbf{u}_j \rVert^2 /(r_a/2)^2\right),
\end{equation}
where $r_a$ is a positive value to denote the radius. The data points outside this radius contribute only slightly to the density measurement.
After the density measurement of each data point has been calculated, the data point
with the highest density measurement $D_{c1}$ is selected as the first cluster center $\mathbf{u}_{c1}$.
Then, the density measurement of each data point is corrected by
\begin{equation}\label{den}
D_i^{'}=D_i-D_{c1}\exp\left(- \lVert \mathbf{u}_i - \mathbf{u}_{c1} \rVert^2 /(r_b/2)^2\right),
\end{equation}
where $r_b$ is a positive value and normally set to be larger than $r_a$ to prevent the occurrence of closely spaced cluster centers.
After the density measurement of each data point is revised, the next cluster center $\mathbf{u}_{c2}$ is selected,
and all density measurements of the data points are revised again.
This process is repeated until a sufficient number of cluster centers are generated.
We can infer that the first cluster center $\mathbf{u}_{c1}$ has the largest density among all centers based on the above process, and $\mathbf{u}_{c1}$ is thereby selected as the final estimation of $\mathbf{u}$.
Similarly, we denote $\dot{\mathbf{u}}_{c1}$ as the first cluster center of $L$ predictions of $\dot{\mathbf{u}}$.
Finally,
the output of the selector in the eWLS-Net sub-Net 1 is $\hat{\mathbf{x}} = (\mathbf{u}_{c1}^T,\dot{\mathbf{u}}_{c1}^T)^T$.

\vspace{-0.25cm}
\section{Numerical Results}
\subsection{WLS-Based Joint Position and Velocity Estimation}\label{sp}
In this section, we analyze the performance of the proposed WLS algorithm in jointly estimating the position and velocity of the UE.
We consider a scenario with $N=6$ RRHs, located at $[-400,0,0]^T$, $[400,0,0]^T$, $[200,350,0]^T$, $[-200,350,0]^T$, $[-200,-350,0]^T$, and $[200,-350,0]^T$ in meters, respectively.
The UE is located at $\mathbf{u}^\circ=[300,-20,-100]^T$ (m) with the velocity $\dot{\mathbf{u}}^\circ=[-9,7,5]^T$ (m/s).
The CU selects $N_{a}=2$ to $6$ RRHs with LoS paths to locate the UE with the proposed localization method.

The covariance matrix of the noise terms defined in \eqref{m} is given by
{\begingroup\makeatletter\def\f@size{9}\check@mathfonts		 \def\maketag@@@#1{\hbox{\m@th\normalsize\normalfont#1}}\setlength{\arraycolsep}{0.0em}\setlength{\arraycolsep}{0.0em}
	\begin{eqnarray}\label{mm}
	\mathbf{Q}\!\!=\!\mbox{blkdiag}(\overbrace{\mathbf{Q}_d,\ldots,\mathbf{Q}_d}^{(N_{a}-1)},
	\overbrace{\mathbf{Q}_a,\ldots,\mathbf{Q}_a}^{N_{a}}),
	\end{eqnarray}\setlength{\arraycolsep}{5pt}\endgroup}where  $\mathbf{Q}_d=\mbox{diag}(\sigma_d^2,(0.1\sigma_d)^2)$,
$\mathbf{Q}_a=\mbox{diag}(\sigma_a^2,\sigma_a^2)$, and $\sigma_d$, $0.1\sigma_d$, and $\sigma_a$ are the standard deviations of TDoA, FDoA, and AoA measurements.
The order of the elements in \eqref{mm} is the same as that in \eqref{m}, in which the first $\!(\!N_{a}\!-\!1\!)\!$ pairs are TDoA and FDoA pairs (the covariance matrix for each pair is $\mathbf{Q}_d$), and the last $N_{a}$ pairs are AoA pairs (i.e., the covariance matrix for each pair is $\mathbf{Q}_a$).
The total number of iterations $T$ is set to $5$ (the algorithm has converged).
The localization accuracy is
assessed via the root mean square error $\mbox{RMSE}(\mathbf{u})=\sqrt{\sum_{i=1}^{L_a}||\mathbf{u}_i-\mathbf{u}^\circ||^2/L_a}$ and $\mbox{RMSE}(\dot{\mathbf{u}})=\sqrt{\sum_{i=1}^{L_a}||\dot{\mathbf{u}}_i-\dot{\mathbf{u}}^\circ||^2/L_a}$,
where $\mathbf{u}_i$ and $\dot{\mathbf{u}}_i$ are
the estimation of $\mathbf{u}^\circ$ and $\dot{\mathbf{u}}^\circ$ at the $i$-th Monte Carlo simulation, respectively.
All the numerical results provided in this section are obtained from $L_a=1000$ independent Monte Carlo simulations.

\begin{figure}[t]
	\centering
	\hspace{-0.25cm}
	\begin{minipage}[t]{0.48\textwidth}
		\centering
		\includegraphics[scale=0.46,angle=0]{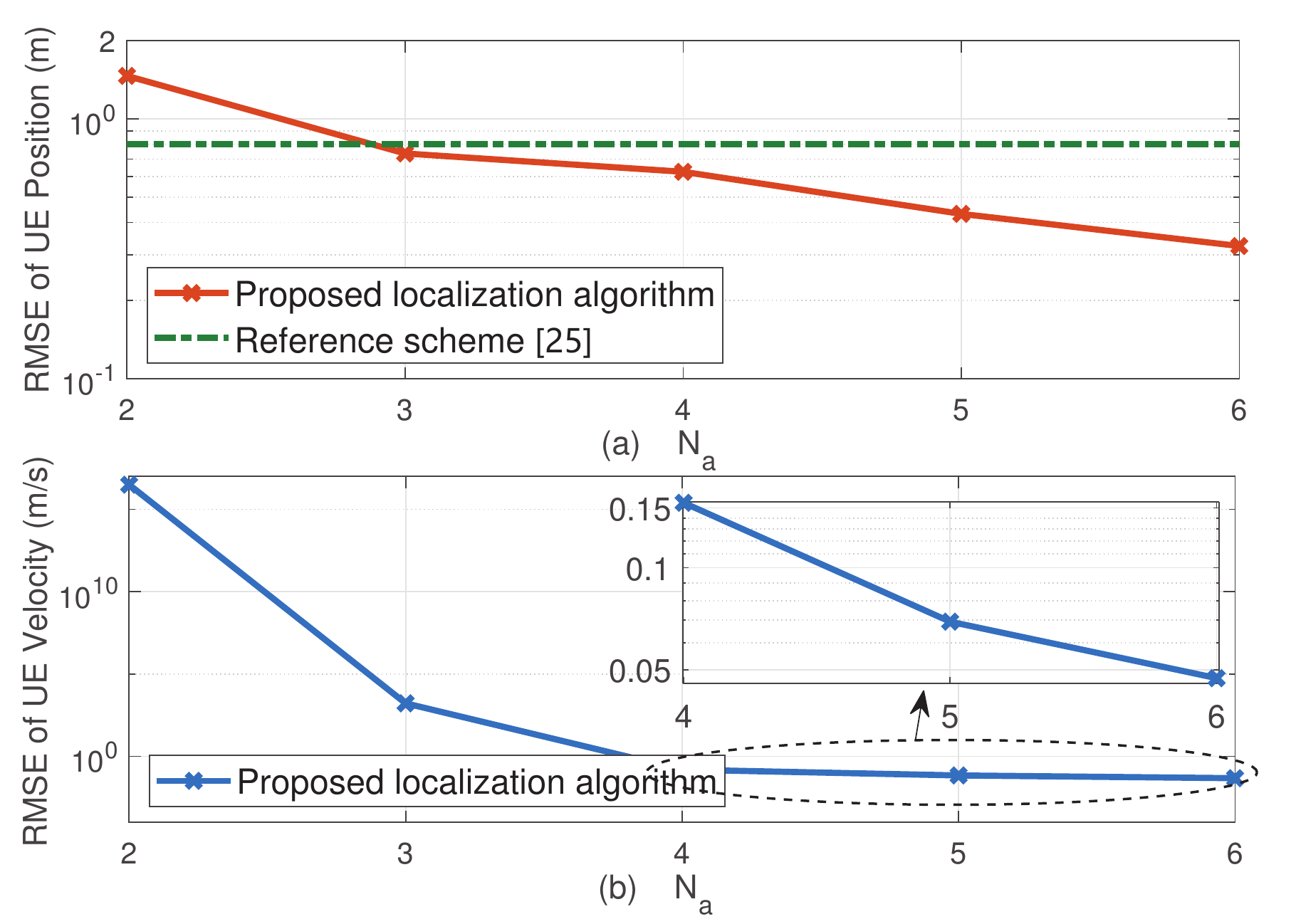}
		\captionsetup{font=footnotesize}
		\caption{$\mbox{RMSE}$ as a function of $N_{a}$ for (a) position of UE and (b) velocity of UE.}\label{fig:rrh}
	\end{minipage}
	\hspace{0.25cm}
	\begin{minipage}[t]{0.48\textwidth}
		\centering
		\includegraphics[scale=0.43,angle=0]{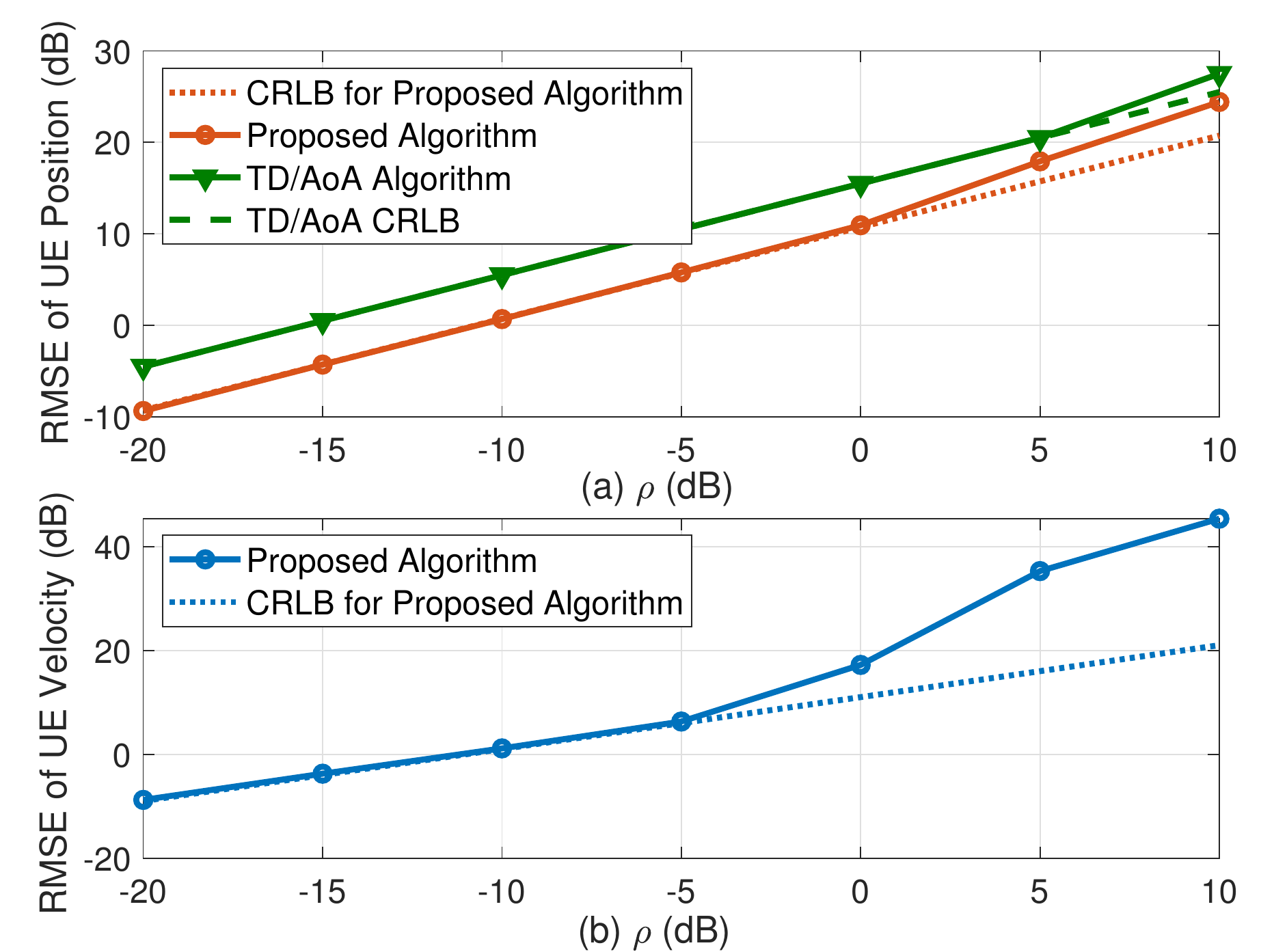}
		\captionsetup{font=footnotesize}
		\caption{Performance analysis of the presented positioning algorithm by comparison with the TD/AoA algorithm and the CRLB as a function of noise scaling factor for (a) position of UE and (b) velocity of UE.}\label{fig:crlb}
	\end{minipage}
\end{figure}

In the first simulation scenario, we analyze the performance of the proposed algorithm with different $N_{a}$ by setting $\sigma_d = 0.1$ (m) and $\sigma_a = 0.01$ (rad).
The reference scheme in Fig. \ref{fig:rrh}(a) shows the performance of the downlink positioning proposed in \cite{l5} with uninformative clock bias and UE orientation, which is achieved by combining more than $4$ single-bounce NLoS paths.
We observe a number of interesting facts from Fig. \ref{fig:rrh}.
In all cases, having a larger $N_{a}$ is beneficial, which can be explained by Theorem \ref{theorem1}.
We can achieve nearly the same accuracy with the reference scheme by simply using  $2$ RRHs,
while using $3$ or more RRHs in our scheme can outperform the latter.
Compared with the downlink localization, the proposed uplink localization will not be affected by the unknown antenna array orientation of the UE.
To our best knowledge, no other TDoA/FDoA/AoA-based joint velocity and position estimation algorithm in 3-D space can be compared with our proposed method in Fig. \ref{fig:rrh}(b).
Nevertheless, we can realize sub-decimeter per-second-level accuracy for velocity estimation with $4$ or more RRHs, as shown in Fig.  \ref{fig:rrh}(b).

In the second scenario, we consider the CRLB as the benchmark of accuracy of the proposed algorithm.
We set $N_{a} = 6$, $\sigma_d = 40 \rho$, and $\sigma_a = 0.1\rho$, where $\rho$ is a scaling factor.
The results shown in Fig. \ref{fig:crlb} verify that the proposed TDoA/AoA positioning is more accurate than the TD/AoA positioning presented in \cite{l2} due to the cancelation of the clock bias, and it can achieve the CRLB for small $\rho$ (noise level).
Increasing the noise level results in a slow deviation from the CRLB for both position and velocity estimations because the nonlinear terms in $\mathbf{e}$ in the derivation of our algorithm have been ignored.

In the third scenario, we analyze the performance of the proposed joint estimation algorithm with different $\sigma_d$ and $\sigma_a$ by setting $N_{a}=6$.
Fig. \ref{fig:sigma} demonstrates that the proposed algorithm can further enhance precision by using highly accurate channel parameter measurements.
Position is mainly determined by the AoA and TDoA measurements;
hence, the RMSE of UE position is affected by $\sigma_d$ and $\sigma_a$ at the same time.
However, velocity is mainly determined by the FDoA measurements; hence, the RMSE of the UE velocity is mainly affected by $\sigma_d$ when the value of $\sigma_d$ is large.
In particular, with $\sigma_d < -10$ dB and $\sigma_a \leq 0.1^\circ$, the estimated position can achieve sub-decimeter level accuracy.
Moreover, with $\sigma_d < 0$ dB and $\sigma_a \leq 1^\circ$, the estimated velocity can achieve sub-decimeter per-second-level accuracy.
\begin{figure}[t]
	\centering
	\hspace{-0.25cm}
	\begin{minipage}[t]{0.48\textwidth}
		\centering
		\includegraphics[scale=0.45,angle=0]{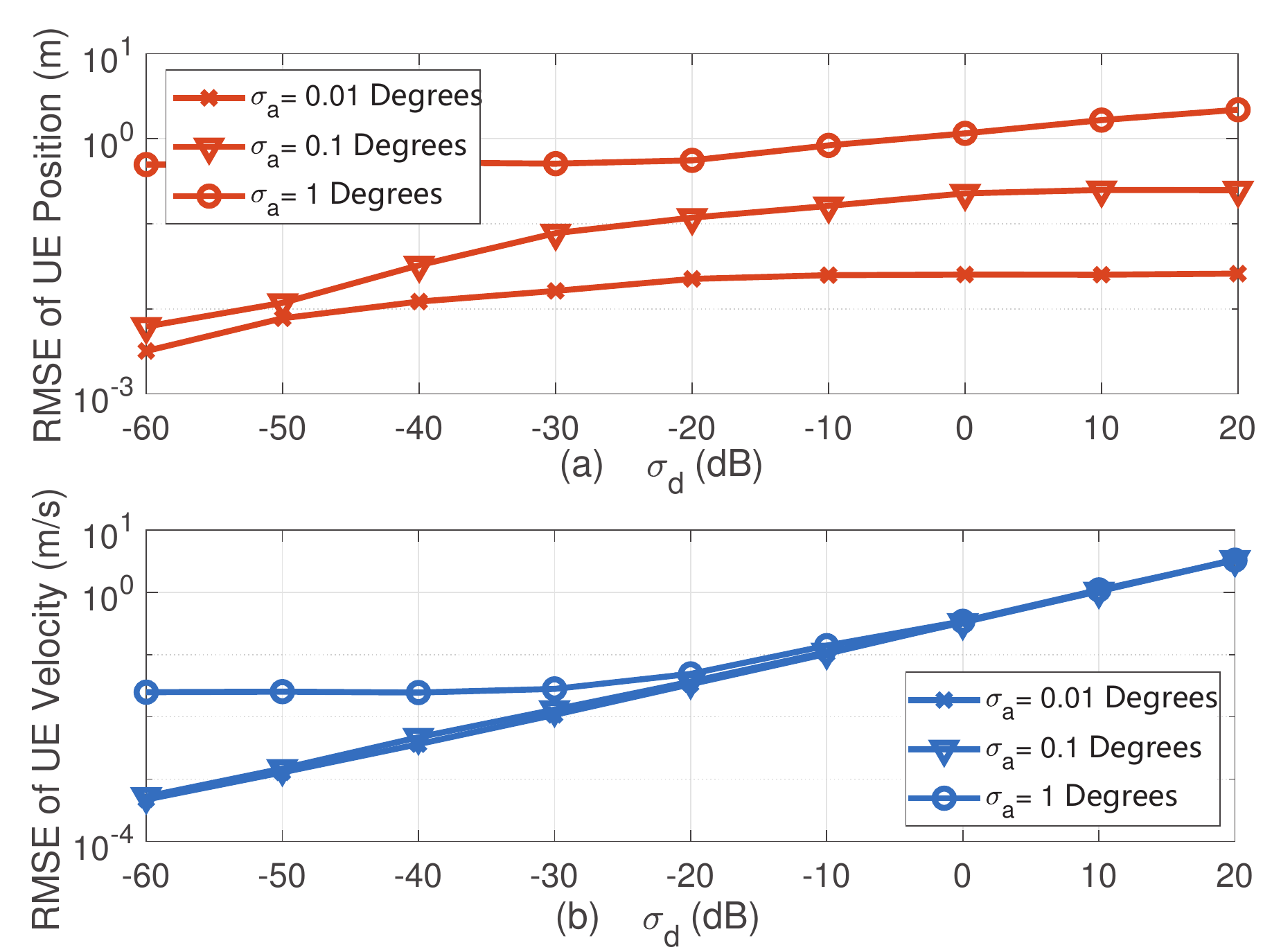}
		\captionsetup{font=footnotesize}
		\caption{RMSE as a function of $\sigma_d$ with a given $\sigma_a$ for (a) position of UE and (b) velocity of UE.}
		\label{fig:sigma}
	\end{minipage}
	\hspace{0.25cm}
	\begin{minipage}[t]{0.48\textwidth}
		\centering
		\includegraphics[scale=0.35,angle=0]{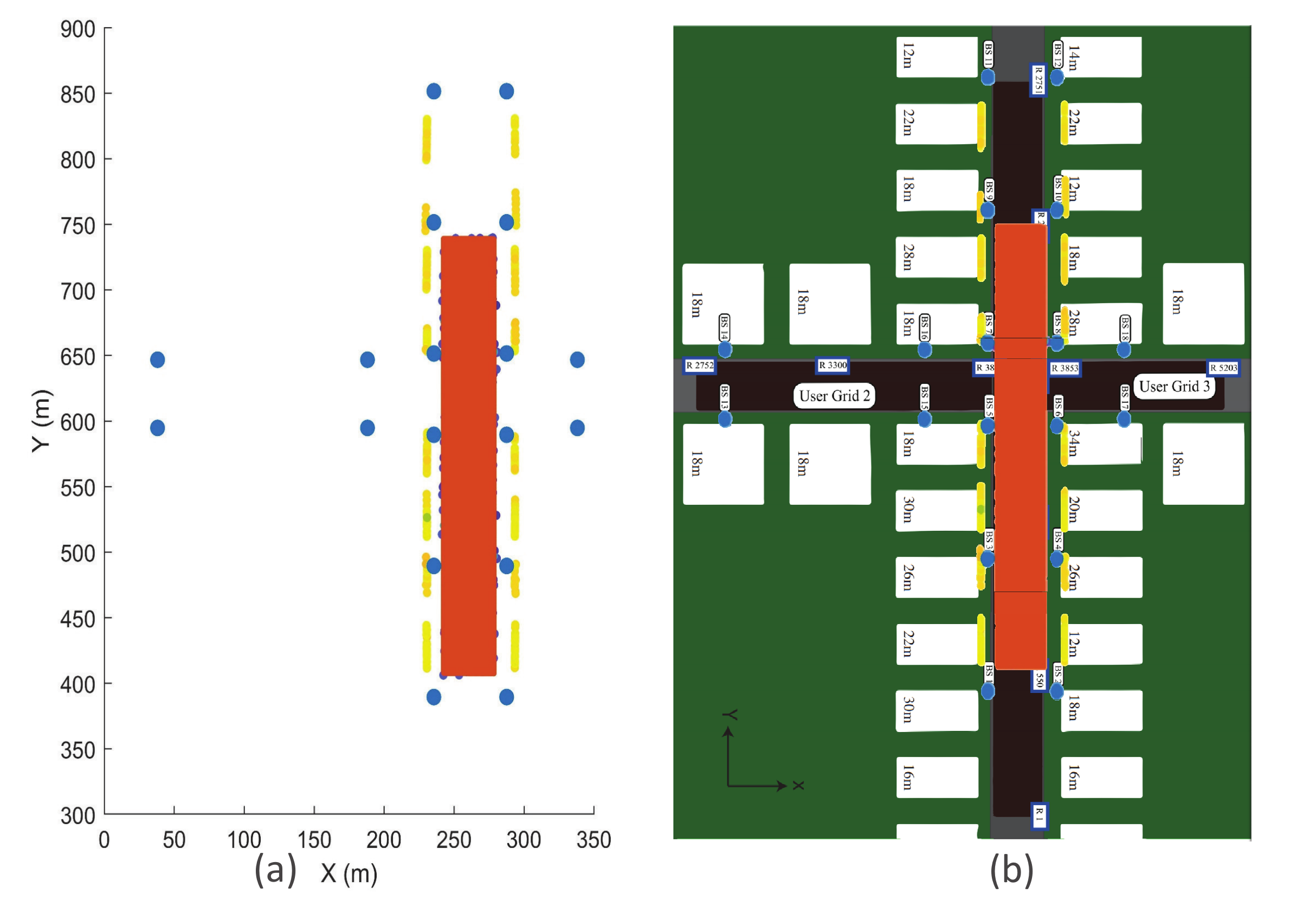}
		\captionsetup{font=footnotesize}
		\caption{Environment mapping result for the ray-tracing dataset. The blue points are the locations of 18 RRHs, the yellow points are the estimated positions of the scatterers, and the orange area is the UE moving area.}
		\label{mappingmatch}
	\end{minipage}
\end{figure}

\subsection{WLS-Based Environment Mapping}\label{ray}

In this section, we verify the effectiveness of the proposed environment mapping algorithm. We utilize a general dataset for mmWave massive MIMO constructed on the basis of the ray-tracing data from Remcom Wireless InSite \cite{deepmimo,raytracing} to verify the assumption of the single-bounce NLoS path, as this approach can accurately simulate real-world scenarios. The dataset represents an urban scenario with 18 RRHs and more than one million UEs. The deployment is shown in Fig. \ref{mappingmatch}(b), and the RRH locations are given in Table \ref{Tbs}.
\vspace{0.5cm}
\begin{table}[htbp]
	\footnotesize
	\centering
	\captionsetup{font=small}
	\caption{Locations of the RRHs in meters.}\label{Tbs}
	\begin{tabular}{|c|c|c|c|c|c|c|c|c|c|}
		\hline
		\hline
		& 1 & 2 & 3 & 4 & 5 & 6 & 7 & 8 & 9\\
		\hline
		x & 235.5042 & 287.5042 & 235.5042 & 287.5042 &  235.5042 & 287.5042 & 235.5042 & 287.5042 & 235.5042 \\
		y & 389.5038 & 389.5038 & 489.5038 & 489.5038 &  589.5038 & 589.5038 & 851.5038 & 851.5038 & 651.5038\\
		z & 6 & 6 & 6 & 6 & 6 & 6 & 6 & 6 & 6 \\      		
		\hline
		\hline
		& 10 & 11 & 12 & 13 & 14 & 15 & 16 & 17 & 18 \\
		\hline
		x & 287.5042 &  235.5042 & 287.5042 &  38.0751 & 38.0751 & 188.0751 & 188.0751 &  338.0751 & 338.0751 \\
		y & 651.5038 &  751.5038 & 751.5038 &  594.7361 & 646.7361 & 594.7361 & 646.7361 &  594.7361 & 646.7361 \\
		z & 6 & 6 & 6 & 6 & 6 & 6 & 6 & 6 & 6 \\
		\hline
	\end{tabular}
\end{table}

Let the UE move toward $1000$ different locations in the orange area of the street, as shown in Fig. \ref{mappingmatch}(a). The UE omni-directionally transmits a signal, and the RRHs receive the signal and obtain the hybrid measurements.
In our simulations,
we use the second and third strongest paths received by the first to twelfth RRHs for environment mapping.
The mapping result in Fig. \ref{mappingmatch} offers interesting findings.
First, the estimated locations of the scatterers match the position of the walls of the buildings, as illustrated in Fig. \ref{mappingmatch}(b).
Second, most of the second and third strongest paths in the ray-tracing dataset are verified to be the single-bounce NLoS paths, which also corroborates the assumption of the proposed environment mapping algorithm.

The physical interpretation of the proposed environment mapping algorithm has thus been proven. Subsequently, we will analyze algorithm performance by comparing the RMSE of the proposed environment mapping algorithm with that of the CRLB.
The simulation parameters are the same as those in Section \ref{sp}, including the number and position of the BSs and the location and velocity of the UE.
The unknown scatterer is located at $[50, 200, -70]^T$ in meters.
The numerical result is obtained from $1000$ independent Monte Carlo simulations.
As shown in Fig. \ref{emapping}, the proposed environment mapping algorithm can achieve the CRLB when $\rho \leq 5$ dB.
However,
the RMSE of the estimated location of the scatterer is larger than that of the UE, as shown in Fig. \ref{fig:crlb}(a),
because the number of measurements used in the environment mapping are less than that in the UE position estimation. The position estimation error of the UE will then proceed to the proposed environment mapping algorithm.

\begin{figure}[t]
	\centering
	\hspace{-0.4cm}
	\begin{minipage}[t]{0.48\textwidth}
		\centering
		\includegraphics[scale=0.4,angle=0]{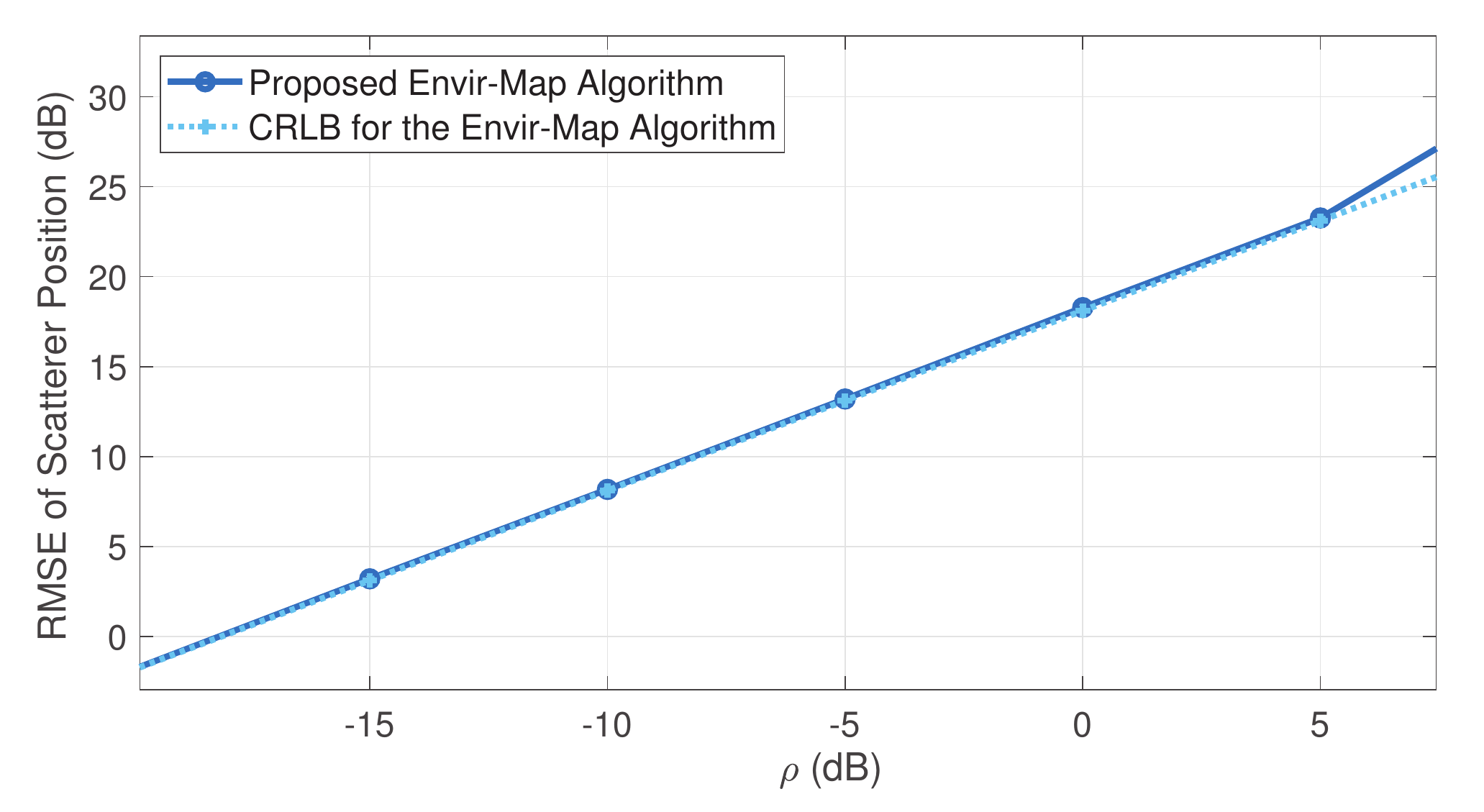}
		\captionsetup{font=footnotesize}
		\caption{Comparison of RMSEs of the proposed environment mapping algorithm and the CRLB.}
		\label{emapping}
	\end{minipage}
	\hspace{0.25cm}
	\begin{minipage}[t]{0.46\textwidth}
		\centering
		\includegraphics[scale=0.41,angle=0]{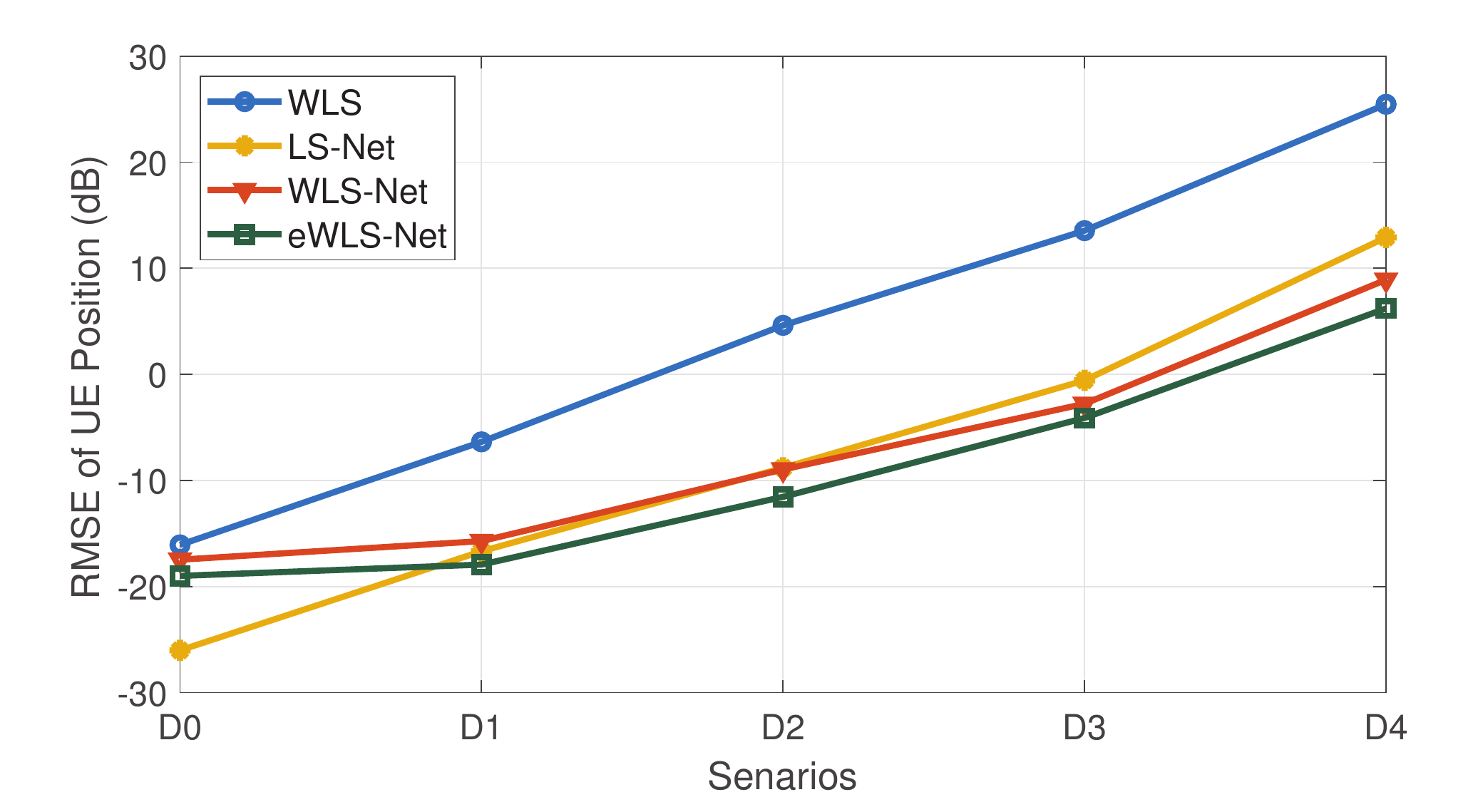}
		\captionsetup{font=footnotesize}
		\caption{RMSE curves of the UE position estimation for the WLS algorithm, WLS-Net, eWLS-Net, and LS-Net.}
		\label{1u}
	\end{minipage}
\end{figure}

\subsection{Neural Network-Assisted WLS Algorithm}
As mentioned in Sections \ref{sp} and \ref{ray}, when the noise level increases, the performance of the proposed WLS-based joint estimation and mapping algorithms will deviate from the CRLB  because the linear approximation of $\mathbf{e}$ becomes inaccurate.
We embed the neural networks into the proposed WLS algorithms to learn the residual vector $\mathbf{e}$, thus improving estimation accuracy. In this section, we will explore the performance of the proposed neural network-assisted WLS algorithm. 

The measurement errors in the ray-tracing dataset mentioned in Section \ref{ray} are relevant; hence, the neural network should be designed to learn the relationship and subsequently enhance localization performance. In the ray-tracing dataset, the standard deviations of the measurements related to TDoA, FDoA\footnote{There are no FDoA measurements given in the ray-tracing dataset. For each user, we generate its speed in a random way, and then calculate its corresponding FDoA measurements.}, azimuth AoA
($\phi$), and elevation AoA ($\theta$) are $0.0686$ (m), $0.0062$ (m/s), $1.675 \times10^{-6}$ (rad), and $1.239\times10^{-5}$ (rad), respectively.
The training, validation, and testing sets contain around $60000$, $20000$, and $20000$ samples, respectively. All testing samples are excluded from the training and validation samples.
The RMSE results of the FP algorithm\footnote{In the FP algorithm, the input and the neural network architecture is the same as the WLS-Net sub-Net 1 (except that the number of neurons in the output layer changes to 6), and the output of the neural network are directly the position and velocity of the UE.}, WLS (Section \ref{wls1}), WLS-Net (Section \ref{wlsnet}), and eWLS-Net (Section \ref{ewlsnet}) are given in Table \ref{T1}.
The result in Table \ref{T1} shows that the WLS-Net is more accurate in joint position and velocity estimation than the WLS algorithm.
The eWLS-Net can further improve the accuracy (in our simulations $L=10$).
The FP algorithm is the simplest to operate but has the worst accuracy.

\begin{table}[b]
	\vspace{0.4cm}
	\centering
	\hspace{-0.3cm}
	\begin{minipage}[t]{0.38\textwidth}
		\centering
		\footnotesize
		\captionsetup{font=footnotesize}
		\caption{RMSE of the FP, WLS, WLS-Net, and eWLS-Net with the ray-tracing dataset.}\label{T1}
		\begin{tabular}{c|c c}
			\hline
			\hline
			\multicolumn{2}{c|}{TDoA} & 0.0686 \\
			\multicolumn{2}{c|}{FDoA} & 0.0062 \\
			\multicolumn{2}{c|}{AoA ($\theta$)} & 1.675$\times10^{-6}$ \\
			\multicolumn{2}{c|}{AoA ($\phi$)} & 1.239$\times10^{-5}$ \\
			\hline
			\multirow{2}*{FP}& u & 0.1782 \\
			\cline{2-3}
			&\multicolumn{1}{>{\columncolor{mypink}}c}{v}  & \multicolumn{1}{>{\columncolor{mypink}}c}{0.2109} \\
			\cline{1-3}
			\multirow{2}*{WLS}& u & 0.0200 \\
			\cline{2-3}
			&\multicolumn{1}{>{\columncolor{mypink}}c}{v}  & \multicolumn{1}{>{\columncolor{mypink}}c}{0.0143} \\
			\cline{1-3}
			\multirow{2}*{WLS-Net}& u & 0.0180 \\
			\cline{2-3}
			& \multicolumn{1}{>{\columncolor{mypink}}c}{v} & \multicolumn{1}{>{\columncolor{mypink}}c}{0.0054} \\
			\cline{1-3}
			\multirow{2}*{eWLS-Net}& u & \textbf{0.0104} \\
			\cline{2-3}
			& \multicolumn{1}{>{\columncolor{mypink}}c}{v} & \multicolumn{1}{>{\columncolor{mypink}}c}{\textbf{0.0039}} \\
			\hline
		\end{tabular}
	\end{minipage}
	\hspace{0.15cm}
	\begin{minipage}[t]{0.6\textwidth}
		\centering
		\footnotesize
		\captionsetup{font=footnotesize}
		\caption{RMSE of the WLS algorithm, WLS-Net, eWLS-Net, and LS-Net in different scenarios.}\label{T2}
		\begin{tabular}{c|cccccc}
			\hline
			\hline
			\multicolumn{2}{c|}{scenario} & D0 & D1 & D2 & D3 & D4 \\
			\hline
			\multicolumn{2}{c|}{TDoA} & 0.1 & 1 & 10 & 100 & 1000 \\
			\multicolumn{2}{c|}{FDoA} & 0.01 & 0.1 & 1 & 10 & 100 \\
			\multicolumn{2}{c|}{AoA} & 0.001 & 0.01 & 0.1 & 1 & 10 \\
			\hline
			\multirow{2}*{WLS}& u & 0.0247 & 0.2311 & 2.8939 & 22.6304 & 351.0432 \\
			\cline{2-7}
			&\multicolumn{1}{>{\columncolor{mypink}}c}{v}  & \multicolumn{1}{>{\columncolor{mypink}}c}{0.0689} & \multicolumn{1}{>{\columncolor{mypink}}c}{0.8533} & \multicolumn{1}{>{\columncolor{mypink}}c}{11.2695} & \multicolumn{1}{>{\columncolor{mypink}}c}{21.1805}& \multicolumn{1}{>{\columncolor{mypink}}c}{56.3607} \\
			\cline{1-7}
			\multirow{2}*{WLS-Net}& u & 0.0179 & 0.0268 & 0.1268 & 0.5276 & 7.7910 \\
			\cline{2-7}
			& \multicolumn{1}{>{\columncolor{mypink}}c}{v} & \multicolumn{1}{>{\columncolor{mypink}}c}{0.0176} & \multicolumn{1}{>{\columncolor{mypink}}c}{0.0190} & \multicolumn{1}{>{\columncolor{mypink}}c}{0.4279} & \multicolumn{1}{>{\columncolor{mypink}}c}{1.5478} & \multicolumn{1}{>{\columncolor{mypink}}c}{14.1086} \\
			\cline{1-7}
			\multirow{2}*{eWLS-Net}& u & 0.0126 & \textbf{0.0161} & \textbf{0.0700} & \textbf{0.3880} & \textbf{4.1818} \\
			\cline{2-7}
			& \multicolumn{1}{>{\columncolor{mypink}}c}{v} & \multicolumn{1}{>{\columncolor{mypink}}c}{\textbf{0.0169}} & \multicolumn{1}{>{\columncolor{mypink}}c}{\textbf{0.0132}} & \multicolumn{1}{>{\columncolor{mypink}}c}{\textbf{0.4088}} & \multicolumn{1}{>{\columncolor{mypink}}c}{\textbf{1.2862}} & \multicolumn{1}{>{\columncolor{mypink}}c}{\textbf{12.7703}} \\
			\cline{1-7}
			\multirow{2}*{LS-Net}& u & \textbf{0.0025} & 0.0215 & 0.1314 & 0.5776 & 19.5489 \\
			\cline{2-7}
			& \multicolumn{1}{>{\columncolor{mypink}}c}{v} & \multicolumn{1}{>{\columncolor{mypink}}c}{0.0174} & \multicolumn{1}{>{\columncolor{mypink}}c}{0.0184} & \multicolumn{1}{>{\columncolor{mypink}}c}{0.4305} & \multicolumn{1}{>{\columncolor{mypink}}c}{1.5449} & \multicolumn{1}{>{\columncolor{mypink}}c}{14.2208} \\
			\hline
		\end{tabular}
	\end{minipage}
\end{table}

For an in-depth analysis of the performance of the neural network-assisted WLS algorithms, we will increase the measurement errors. By observing the ray-tracing dataset, we find that the measurement errors include a dominant part and a fluctuating part, in which the dominant part is the unknown fixed error and the fluctuating part is the Gaussian random error. We define five scenarios, from D0 to D4, by increasing the error variance of the two parts but also by keeping the ratio of the variance of the two parts unchanged. The standard deviations of the TDoA-, FDoA-, and AoA-related measurements for the fluctuating error are $0.0001$, $0.001$, and $0.001$ times of that for the dominant error, respectively. The standard deviations of the TDoA-, FDoA-, and AoA-related measurements for the dominant error in D0 are $0.1$ (m), $0.01$ (m/s), and $0.001$ (rad), respectively. For comparison, we define in this study the LS-Net-based joint position and velocity estimation algorithm. In particular, after obtaining the estimated residual vector ̂$\hat{\mathbf{e}}$ from the neural network (the same way as that implemented in the WLS-Net sub-Net 1), we deduct ̂$\hat{\mathbf{e}}$ from \eqref{20}. Then, we have
\vspace{-0.1cm}
\begin{equation}\label{200}
\tilde{\mathbf{h}} - \hat{\mathbf{e}} \doteq \tilde{\mathbf{G}}\mathbf{x}^{\circ}.
\vspace{-0.25cm}
\end{equation}
By directly applying the LS algorithm, we have
\vspace{-0.15cm}
\begin{equation}
\mathbf{x}=(\tilde{\mathbf{G}}^{T}\tilde{\mathbf{G}})^{-1}\tilde{\mathbf{G}}^{T}(\tilde{\mathbf{h}} - \hat{\mathbf{e}}).
\vspace{-0.25cm}
\end{equation}
The WLS algorithm, WLS-Net, eWLS-Net, and LS-Net are executed with the same datasets.
The RMSE results are shown in Table \ref{T2}.
For better illustration, we draw the RMSE curves of the UE position estimation for the WLS algorithm, WLS-Net, eWLS-Net, and LS-Net in Fig. \ref{1u}.
The RMSE curves of the UE velocity estimation have similar trends; hence, we omit the figure due to lack of space in this paper. In the figure, the performance of the algorithms based on the neural networks outperform the traditional WLS algorithm in the given simulation scenarios.
Interestingly,
when the measurement error is very small (in D0), the LS-Net performs best. However, as the measurement error increases (in D3 and D4), the advantage of the WLS-Net becomes apparent.
This phenomenon can be explained as follows: when the measurement error is small, $\mathbf{e}$ is small and the model in \eqref{200} is highly accurate. Then, the LS algorithm can be used to derive good results. Meanwhile, introducing the weight matrix $\mathbf{W}$ in the WLS-Net will introduce additional errors, which will deteriorate performance.
However, as the measurement error increases, the performance of the LS algorithm declines; by contrast, in the WLS algorithm, the weight matrix is
$\mathbf{W}=(\hat{\mathbf{e}}\hat{\mathbf{e}}^{T}+a\mathbf{I})^{-1}$, which contains both the information of the dominant error part ($\hat{\mathbf{e}}$ is the learned mean of the dominant error) and the information of the random error part ($a\mathbf{I}$ is the covariance matrix of the Gaussian random error).
Hence, the WLS-Net outperforms the LS-Net.

\begin{table}[b]
	\centering
	\footnotesize
	\captionsetup{font=footnotesize}
	\caption{RMSE of the FP algorithm, WLS algorithm, WLS-Net, and eWLS-Net in different scenarios.}\label{T0}
	\begin{tabular}{c|ccccc}
		\hline
		\hline
		\multicolumn{2}{c|}{scenario} & P1 & P2 & P3 & P4\\
		\hline
		\multirow{3}*{Ratio} & \multicolumn{1}{c|}{TDoA} & 0.0001 & 0.001 & 0.01 &  1\\
		& \multicolumn{1}{c|}{FDoA} & 0.001 & 0.01 & 0.1 &  1\\
		& \multicolumn{1}{c|}{AoA} & 0.001 & 0.01 & 0.1 &  1\\
		\hline
		\cline{1-6}
		\multirow{2}*{FP}& u & 0.2806 & 0.2481 & 0.2282 &  0.4874\\
		\cline{2-6}
		& \multicolumn{1}{>{\columncolor{mypink}}c}{v} & \multicolumn{1}{>{\columncolor{mypink}}c}{0.9568} & \multicolumn{1}{>{\columncolor{mypink}}c}{0.9812} & \multicolumn{1}{>{\columncolor{mypink}}c}{0.8919} &  \multicolumn{1}{>{\columncolor{mypink}}c}{0.8695}  \\
		\hline
		\multirow{2}*{WLS}& u & 0.2311 & 0.1620 & 0.2616 &  \textbf{0.2257} \\
		\cline{2-6}
		&\multicolumn{1}{>{\columncolor{mypink}}c}{v}  & \multicolumn{1}{>{\columncolor{mypink}}c}{0.8533} & \multicolumn{1}{>{\columncolor{mypink}}c}{0.7096} & \multicolumn{1}{>{\columncolor{mypink}}c}{0.5943} &    \multicolumn{1}{>{\columncolor{mypink}}c}{\textbf{0.7932}}\\
		\cline{1-6}
		\multirow{2}*{WLS-Net}& u & 0.0268 & 0.0233 & 0.0824 &  0.7989 \\
		\cline{2-6}
		& \multicolumn{1}{>{\columncolor{mypink}}c}{v} & \multicolumn{1}{>{\columncolor{mypink}}c}{0.0190} & \multicolumn{1}{>{\columncolor{mypink}}c}{0.0480} & \multicolumn{1}{>{\columncolor{mypink}}c}{0.1423} &  \multicolumn{1}{>{\columncolor{mypink}}c}{0.9886} \\
		\cline{1-6}
		\multirow{2}*{eWLS-Net}& u & \textbf{0.0161} & \textbf{0.0168} & \textbf{0.0813} &  0.7935\\
		\cline{2-6}
		& \multicolumn{1}{>{\columncolor{mypink}}c}{v} & \multicolumn{1}{>{\columncolor{mypink}}c}{\textbf{0.0132}} & \multicolumn{1}{>{\columncolor{mypink}}c}{\textbf{0.0455}} & \multicolumn{1}{>{\columncolor{mypink}}c}{\textbf{0.1402}} &  \multicolumn{1}{>{\columncolor{mypink}}c}{0.9851} \\
		\hline		
	\end{tabular}
\vspace{-0.5cm}
\end{table}
We analyze the performance of the proposed algorithms by increasing the ratio of the error standard deviation of the random part to that of the fixed part.
By fixing the standard deviation of the dominant error to the setting in D0, and by increasing the proportion of the random error, we can define four scenarios, from P1 to P4.
For example, for the standard deviation of the TDoA-related measurements in P3, the ratio of the random part to the fixed part is $0.01$.
In particular, we define P4 as a scenario in which the measurement error is completely the Gaussian random error.
The RMSE results are shown in Table \ref{T0}.
Given that the used FP algorithm is purely data driven and lacks the assistance of the geometric model, the performance of the FP algorithm in the simulations is worse than that of the proposed WLS algorithm.
Moreover, the proposed WLS algorithm will not learn the correlation between measurement errors; hence, the RMSE results of the WLS algorithm are similar in all cases, from P1 to P4. However, the WLS-Net can learn the dominant error. Therefore, when the proportion of the random part of the measurement error is small, the WLS-Net can surpass the WLS algorithm, and the eWLS-Net can further improve the accuracy. As the proportion of the random part increases, the ability of the neural networks decreases. Most especially in P4, for completely random measurement errors, the WLS algorithm performs the best.


\subsection{Time Resources}
In this section, we compare the time required by the different algorithms to jointly estimate the position and velocity of a single UE. The WLS algorithm needs $0.062251$ seconds.
The time needed by the WLS-Net consists of two parts.
The test time of the neural network is $1.595 \times 10^{-7}$ seconds,
and the time required to input the results of the neural network into the WLS algorithm and obtain the final estimation is $0.011337$ seconds.
Hence, the total time needed by the WLS-Net is $0.011337$ seconds, which is $18\%$ of the time needed by the WLS algorithm.
The eWLS-Net takes more time than the WLS-Net because the ensemble time is $0.0016$ seconds.
Thus, the total time needed by the eWLS-Net is $0.01294$ seconds,
which is $20.79\%$ of the time needed by the WLS algorithm.
This scheme is reasonable because the WLS algorithm requires several iterations, which is time consuming, whereas the WLS-Net and the eWLS-Net do not need to undergo such iterations.

\section{Conclusion}

This study considered the joint position and velocity estimation and environment mapping problem in the 3-D mmWave CRAN architecture. First, we embedded the cooperative localization into communications and established the joint position and velocity estimation model with hybrid AoA, TDoA, and FDoA measurements. Then, an efficient closed-form WLS solution was deduced and subsequently proven asymptotically unbiased. Second, we built the environment mapping model by exploiting the single-bounce NLoS paths and the estimated UE position. Additionally, we deduced the closed-form WLS solution for the environment mapping problem. The ray-tracing dataset was used to verify the physical demonstration of the single-bounce NLoS path assumption and the effectiveness of the proposed environment mapping algorithm. The simulation results indicate that the WLS-based joint estimation and mapping algorithm can achieve the CRLB under mild measurement noise and attain the desired decimeter-level accuracy.

Furthermore, a neural network-assisted WLS algorithm was proposed for SLAM by embedding the neural networks into the proposed WLS estimators to replace the linear approximation. This study is the first to combine the geometric model and the neural networks in 3-D SLAM methods in the literature. The combination harnesses both the powerful learning ability of the neural network and the robustness of the proposed geometric model. In addition, ensemble learning was introduced to improve performance. The public ray-tracing dataset was used in the simulations to test the performance of the neural network-assisted WLS algorithms, which was proven and can attain centimeter-level accuracy when the measurement error vector has some correlation pattern.

\begin{appendices}
\section{}\label{A}
In this section, we derive the noise-free matrix representation of the joint position and velocity estimation model.
Firstly, we derive $2(N_{a}-1)$ pseudo-linear TDoA and FDoA equations.
Rewriting \eqref{2} as $r_{n1}^\circ+r_1^\circ=r_{n}^\circ$ and squaring both sides yields
\vspace{-0.25cm}
\begin{equation} \label{8}
(r_{n1}^\circ)^2+2r_{n1}^\circ r_1^\circ=\mathbf{b}_n^T\mathbf{b}_n-\mathbf{b}_1^T\mathbf{b}_1-2(\mathbf{b}_n-\mathbf{b}_1)^T \mathbf{u}^\circ.
\vspace{-0.25cm}
\end{equation}
Equation \eqref{8} is pseudo-linear with respect to $\mathbf{u}^\circ$ and $r_1^\circ$,
where $r_1^\circ$ cannot be obtained directly from the channel measurement TDoA.
To eliminate $r_1^\circ$, we utilize the geometrical relationship
$\mathbf{u}^\circ-\mathbf{b}_1=r_1^\circ \mathbf{a}_1^\circ,$
The unit vector $\mathbf{a}_1^\circ$ possesses the properties:
$\mathbf{a}_1^{\circ T}\mathbf{a}_1^\circ=1$ and $\dot{\mathbf{a}}_1^{\circ T}\mathbf{a}_1^\circ=\mathbf{a}_1^{\circ T}\dot{\mathbf{a}}_1^\circ=0$.
Multiplying both sides of \eqref{8} by $\mathbf{a}_1^{\circ T}\mathbf{a}_1^\circ$ leads to
\vspace{-0.25cm}
\begin{equation} \label{11}
(r_{n1}^\circ)^2-2r_{n1}^\circ \mathbf{a}_1^{\circ T}\mathbf{b}_1-\mathbf{b}_n^T \mathbf{b}_n+\mathbf{b}_1^T \mathbf{b}_1=2[(\mathbf{b}_1-\mathbf{b}_n)^T-r_{n1}^\circ \mathbf{a}_1^{\circ T}]\mathbf{u}^\circ.
\vspace{-0.25cm}
\end{equation}
By taking the time derivative of \eqref{8}, we have
\vspace{-0.25cm}
\begin{equation} \label{12}
\dot{r}_{n1}^{\circ}{r}_{n1}^{\circ}+\dot{r}_{n1}^{\circ}r_1^\circ+r_{n1}^{\circ}\dot{r}_1^\circ=(\mathbf{b}_1-\mathbf{b}_n)^T \dot{\mathbf{u}}^\circ,
\vspace{-0.25cm}
\end{equation}
where $r_1^\circ$ and $\dot{r}_1^\circ$ cannot be obtained directly from the channel measurements TDoA and FDoA. We will eliminate them by the geometrical relationship
\vspace{-0.25cm}
\begin{equation} \label{13}
\dot{\mathbf{u}}^\circ=\dot{r}_1^\circ \mathbf{a}_1^\circ+r_1^\circ\dot{\mathbf{a}}_1^\circ.
\vspace{-0.25cm}
\end{equation}
Since $\mathbf{a}_1^{\circ T}\mathbf{a}_1^\circ=1$, according to \eqref{13}, we get
\vspace{-0.25cm}
\begin{equation} \label{14}
\dot{r}_{n1}^{\circ}{r}_{1}^{\circ}=\dot{r}_{n1}^{\circ}\mathbf{a}_1^{\circ T}(r_1^\circ \mathbf{a}_1^\circ)=\dot{r}_{n1}^{\circ}\mathbf{a}_1^{\circ T}(\mathbf{u}^\circ-\mathbf{b}_1),
\vspace{-0.25cm}
\end{equation}
and
\vspace{-0.25cm}
\begin{equation} \label{15}
{r}_{n1}^{\circ}\dot{r}_{1}^{\circ}={r}_{n1}^{\circ}\mathbf{a}_1^{\circ T}(\dot{r}_1^\circ \mathbf{a}_1^\circ)={r}_{n1}^{\circ}\mathbf{a}_1^{\circ T}(\dot{\mathbf{u}}^\circ-{r}_{1}^{\circ}\dot{\mathbf{a}}_1^\circ)={r}_{n1}^{\circ}\mathbf{a}_1^{\circ T}\dot{\mathbf{u}}^\circ.
\vspace{-0.25cm}
\end{equation}
Substituting \eqref{14} and \eqref{15} into \eqref{12}, we obtain
\vspace{-0.25cm}
\begin{equation} \label{16}
\dot{r}_{n1}^{\circ}{r}_{n1}^{\circ}-\dot{r}_{n1}^{\circ}\mathbf{a}_1^{\circ T}\mathbf{b}_1=-\dot{r}_{n1}^{\circ}\mathbf{a}_1^{\circ T}\mathbf{u}^\circ+[(\mathbf{b}_1-\mathbf{b}_n)^T-{r}_{n1}^{\circ}\mathbf{a}_1^{\circ T}]\dot{\mathbf{u}}^\circ.
\vspace{-0.25cm}
\end{equation}
According to (\ref{11}) and (\ref{16}), for $n=1,\ldots, N_{a}$, $2(N_{a}-1)$ pseudo-linear TDoA and FDoA equations are obtained.
Then, we derive $2N_{a}$ AoA equations for $n=1,2,\ldots, N_{a}$, given by
\vspace{-0.25cm}
\begin{equation} \label{17}
\mathbf{c}_n^{\circ T}\mathbf{b}_n=\mathbf{c}_n^{\circ T}\mathbf{u}^\circ, \ \ \mathbf{d}_n^{\circ T}\mathbf{b}_n=\mathbf{d}_n^{\circ T}\mathbf{u}^\circ.
\vspace{-0.25cm}
\end{equation}
Stacking equations \eqref{11}, \eqref{16} and  \eqref{17} yields
$\mathbf{h}=\mathbf{G}\mathbf{x}^\circ$.

\section{}\label{B}
In this section, we provide a particular choice of $\mathbf{W}$ that minimizes the variance of $\mathbf{x}$.
In view of the nonlinearity of $\mathbf{e}$, it is difficult to get the weighting matrix $\mathbf{W}$ in general. According to \cite{KAY}, the weighting matrix that minimizes the Frobenius norm of the covariance matrix of $\mathbf{x}$ is $\mathbf{W}=(\mathbb{E}\{\mathbf{e}\mathbf{e}^{T}\})^{-1}$.
Firstly, we approximate $\mathbf{e}$ up to the linear noise term.
According to \eqref{191} and \eqref{20}, we get
\vspace{-0.25cm}
\begin{equation}\label{e0}
\mathbf{e}=(\tilde{\mathbf{h}}-\tilde{\mathbf{G}}\mathbf{x}^{\circ})-(\mathbf{h}-\mathbf{G}\mathbf{x}^\circ).
\vspace{-0.25cm}
\end{equation}
Note that $\mathbf{h}$ and $\mathbf{G}$ are expressed by $\{r_{i1}^{\circ},\dot{r}_{i1}^{\circ},\phi_j^{\circ},\theta_j^{\circ}\}$, and $\tilde{\mathbf{h}}$ and $\tilde{\mathbf{G}}$ are expressed by $\{r_{i1}=r_{i1}^{\circ}+\Delta r_{i1}$, $\dot{r}_{i1}=\dot{r}_{i1}^{\circ}+\Delta \dot{r}_{i1}$, $\phi_j=\phi_j^{\circ}+\Delta \phi_j$,
$\theta_j=\theta_j^{\circ}+\Delta\theta_j\}$, for $i=2,\ldots,N_{a}$ and $j=1,\ldots,N_{a}$.
For the differentiable function $f(x_1,x_1,\ldots,x_n)$ on the variables $x_1,x_1,\ldots,x_n$,
there holds
\vspace{-0.25cm}
\begin{equation}\label{23}
f(x_1+\Delta x_1,x_2+\Delta x_2,\ldots,x_n+\Delta x_n)\!-\!f(x_1,x_2,\ldots,x_n)
\frac{\partial f}{\partial x_1}\!\Delta x_1\!+\!\frac{\partial f}{\partial x_2}\!\Delta x_2
\!+\!\ldots\!+\!\frac{\partial f}{\partial x_n}\!\Delta x_n\!+\!o(\rho) \  as \ \rho\!\rightarrow\! 0,
\vspace{-0.15cm}
\end{equation}
where $\rho=\sqrt{(\Delta x_1)^2+(\Delta x_2)^2+\ldots+(\Delta x_n)^2}$.
Applying (\ref{23}) with (\ref{e0}), firstly, for $i=2,\ldots,N_{a}$, we yield
{\begingroup\makeatletter\def\f@size{11}\check@mathfonts
	\def\maketag@@@#1{\hbox{\m@th\normalsize\normalfont#1}}\setlength{\arraycolsep}{0.0em}\setlength{\arraycolsep}{0.0em}
	\begin{eqnarray*}\label{24}
\begin{aligned}
\hspace{-0.13cm}
\mathbf{e}(2i\!-\!3)
&\dot{=}[2r_{i1}^{\circ}\!\!+\!\!2\mathbf{a}_1^{\circ T}\!(\mathbf{u}^{\circ}\!\!-\!\mathbf{b}_1)]\Delta r_{i1}
\!\!+\!\!\left[-2r_{i1}^{\circ}\frac{\partial \mathbf{a}_1^{\circ T}}{\partial{\phi_1^{\circ}}}\mathbf{b}_1
\!\!+\!\!2r_{i1}^{\circ}\frac{\partial \mathbf{a}_1^{\circ T}}{\partial{\phi_1^{\circ}}}\mathbf{u}^{\circ}\right]\!\!
\Delta\phi_1
\!\!+\!\!\left[-2r_{i1}^{\circ}\frac{\partial \mathbf{a}_1^{\circ T}}{\partial{\theta_1^{\circ}}}\mathbf{b}_1
+2r_{i1}^{\circ}\frac{\partial \mathbf{a}_1^{\circ T}}{\partial{\theta_1^{\circ}}}\mathbf{u}^{\circ}\right]\!\!
\Delta\theta_1,
\end{aligned}
\end{eqnarray*}\setlength{\arraycolsep}{5pt}\endgroup}where $2r_{i1}^{\circ}\frac{\partial \mathbf{a}_1^{\circ T}}{\partial{\phi_1^{\circ}}}(\mathbf{u}^{\circ}-\mathbf{b}_1)=2r_{i1}^{\circ}r_{1}^{\circ} \frac{\partial \mathbf{a}_1^{\circ T}}{\partial{\phi_1^{\circ}}}\mathbf{a}_1^{\circ} = 0$, hence, we have
\vspace{-0.15cm}
\begin{equation}\label{241}
\mathbf{e}(2i-3)=2r_i^{\circ} \Delta r_{i1}.
\vspace{-0.3cm}
\end{equation}
Secondly, we have
{\begingroup\makeatletter\def\f@size{11}\check@mathfonts
	\def\maketag@@@#1{\hbox{\m@th\normalsize\normalfont#1}}\setlength{\arraycolsep}{0.0em}\setlength{\arraycolsep}{0.0em}
	\begin{eqnarray*}
\begin{aligned}
\mathbf{e}(2i\!-\!2)&\dot{=}(\dot{r}_{i1}^{\circ}+\mathbf{a}_1^{\circ T}\dot{\mathbf{u}}^{\circ})\Delta r_{i1}+
\left[r_{i1}^{\circ}+\mathbf{a}_1^{\circ T}(\mathbf{u}^{\circ}-\mathbf{b}_1)\right]\!\!\Delta\dot{r}_{i1}
+\!\!\left[-\dot{r}_{i1}^{\circ}\frac{\partial \mathbf{a}_1^{\circ T}}{\partial{\phi_1^{\circ}}}\mathbf{b}_1
\!+\!\dot{r}_{i1}^{\circ}\frac{\partial \mathbf{a}_1^{\circ T}}{\partial{\phi_1^{\circ}}}\mathbf{u}^{\circ}
\!+\!r_{i1}^{\circ}\frac{\partial \mathbf{a}_1^{\circ T}}{\partial{\phi_1^{\circ}}}\dot{\mathbf{u}}^{\circ}\right]
\!\!\Delta\phi_1\\
&+\!\!\left[-\dot{r}_{i1}^{\circ}\frac{\partial \mathbf{a}_1^{\circ T}}{\partial{\theta_1^{\circ}}}\mathbf{b}_1
\!+\!\dot{r}_{i1}^{\circ}\frac{\partial \mathbf{a}_1^{\circ T}}{\partial{\theta_1^{\circ}}}\mathbf{u}^{\circ}
\!+\!r_{i1}^{\circ}\frac{\partial \mathbf{a}_1^{\circ T}}{\partial{\theta_1^{\circ}}}\dot{\mathbf{u}}^{\circ}\right]
\!\!\Delta\theta_1.
\end{aligned}
\end{eqnarray*}\setlength{\arraycolsep}{5pt}\endgroup}By some tedious calculations, we obtain
\vspace{-0.25cm}
\begin{equation}\label{25i}
\mathbf{e}(2i-2)\dot{=} \dot{r}_{i}^{\circ}\Delta r_{i1}+r_{i}^{\circ}\Delta \dot{r}_{i1}+r_{1}^{\circ}r_{i1}^{\circ}\cos^{2}\theta_1^{\circ}\dot{\phi}_1^\circ\Delta\phi_1
+r_{1}^{\circ}r_{i1}^{\circ}\dot{\theta}_1^\circ\Delta\theta_1.
\vspace{-0.25cm}
\end{equation}
Thirdly, for $j=1,\ldots,N_{a}$, we have
\begin{equation}\label{26}
\mathbf{e}(2N_{a}-3+2j)\dot{=}\left(\frac{\partial{\mathbf{c}_j^{\circ T}}}{\partial{\phi_j^{\circ}}}\mathbf{b}_j
-\frac{\partial{\mathbf{c}_j^{\circ T}}}
{\partial{\phi_j^{\circ}}}\mathbf{u}^{\circ}\right)\Delta\phi_j=r_j^{\circ}\cos\theta_j^{\circ}\Delta\phi_j.
\end{equation}
Then, we get
\vspace{-0.25cm}
\begin{equation}\label{28}
\begin{aligned}
\mathbf{e}(2N_{a}-2+2j)&\dot{=}\frac{\partial{\mathbf{d}_j^{\circ T}}}{\partial{\phi_j^{\circ}}}
\left(\mathbf{b}_j
-\mathbf{u}^{\circ}\right)\Delta\phi_j +\frac{\partial{\mathbf{d}_j^{\circ T}}}{\partial{\theta_j^{\circ}}}
\left(\mathbf{b}_j
-\mathbf{u}^{\circ}\right)\Delta\theta_j=r_j^{\circ}\Delta\theta_j.
\end{aligned}
\vspace{-0.25cm}
\end{equation}
Finally, transform the expressions \eqref{241}, \eqref{25i}, \eqref{26}, and \eqref{28} for $i=2,\ldots,N_{a}$ and $j=1,\ldots,N_{a}$ into matrix representation, we obtain
$\mathbf{e} \doteq \mathbf{B}\Delta \mathbf{m}$,
where
{\begingroup\makeatletter\def\f@size{11}\check@mathfonts
	\def\maketag@@@#1{\hbox{\m@th\normalsize\normalfont#1}}\setlength{\arraycolsep}{0.0em}\setlength{\arraycolsep}{0.0em}
	\begin{eqnarray}\label{b}
	\mathbf{B}=\begin{bmatrix}
	2r_2^{\circ}&0&\ldots&0&0&0&0&\ldots&0&0\\
	\dot{r}_2^{\circ}&r_2^{\circ}&\ldots&0&0&a_2
	&b_2&\ldots&0&0\\
	\vdots&\vdots&\vdots&\vdots&\vdots&\vdots&\vdots&\vdots&\vdots&\vdots\\
	0&0&\ldots&2r_N^{\circ}&0&0&0&\ldots&0&0\\
	0&0&\ldots&\dot{r}_N^{\circ}&r_N^{\circ}&a_N
	&b_N&\ldots&0&0\\
	0&0&\ldots&0&0&r_1^{\circ}\cos\theta_1^{\circ}&0&\ldots&0&0\\
	0&0&\ldots&0&0&0&r_1^{\circ}&\ldots&0&0\\
	\vdots&\vdots&\vdots&\vdots&\vdots&\vdots&\vdots&\vdots&\vdots&\vdots\\
	0&0&\ldots&0&0&0&0&\ldots&r_N^{\circ}\cos\theta_N^{\circ}&0\\
	0&0&\ldots&0&0&0&0&\ldots&0&r_N^{\circ}\\
	\end{bmatrix},
	\end{eqnarray}\setlength{\arraycolsep}{5pt}\endgroup}in which $a_2 = r_1^{\circ}r_{21}^{\circ}\dot{\phi}_1^{\circ}\cos^2\theta_1^{\circ}$, $a_N = r_1^{\circ}r_{N1}^{\circ}\dot{\phi}_1^{\circ}\cos^2\theta_1^{\circ}$, $b_2 = r_1^{\circ}r_{21}^{\circ}\dot{\theta}_1^{\circ}$, $b_N = r_1^{\circ}r_{N1}^{\circ}\dot{\theta}_1^{\circ}$, and
$\dot{\phi}_1^{\circ}$ and $\dot{\theta}_1^{\circ}$ are the time derivative of
\eqref{5} with $n=1$, we have
\begin{equation} \label{6}
\dot{\phi}_1^\circ=\frac{\mathbf{a}_1^{\circ T}\dot{\mathbf{u}}^\circ}{r_1^\circ\cos\theta_1^\circ}, \
\dot{\theta}_1^\circ=\frac{\dot{\mathbf{u}}^{\circ T}\mathbf{b}_1^\circ}{r_1^\circ}.
\end{equation}
Since the noise vector $\Delta \mathbf{m}$ is a zero-mean Gaussian vector with
covariance matrix $\mathbf{Q}$, that is,
$\mathbb{E}\{\Delta \mathbf{m}\}=\mathbf{0}$ and $\mathbb{E}\{\Delta \mathbf{m} \Delta \mathbf{m}^T\}=\mathbf{Q}$.
Hence, the distribution of $\Delta \mathbf{m}$ implies the asymptotic distribution of $\mathbf{e}$. The expectation of $\mathbf{e}$ is
$\mathbb{E}\{\mathbf{e}\}\!\doteq\! \mathbb{E}\{\mathbf{B}\Delta \mathbf{m}\}\!=\!\mathbf{B}\mathbb{E}\{\Delta \mathbf{m}\}\!=\!\mathbf{0}$,
and the covariance matrix of $\mathbf{e}$ is $
\mbox{cov}(\mathbf{e})=\mathbb{E}\left\{(\mathbf{e}-\mathbb{E}(\mathbf{e}))(\mathbf{e}-\mathbb{E}(\mathbf{e}))^T\right\}\doteq \mathbb{E}\{\mathbf{e} \mathbf{e}^T\}=\mathbf{B}\mathbf{Q}\mathbf{B}^T$.
Therefore, from $\mathbf{W}=(\mathbb{E}\{\mathbf{e}\mathbf{e}^{T}\})^{-1}$, the weighting matrix can be easily calculated as
$\mathbf{W}=(\mathbf{B}\mathbf{Q}\mathbf{B}^T)^{-1}$.

\section{}\label{D}
In this section, we will calculate the partial derivatives appearing in (\ref{35}), respectively.
Firstly, from (\ref{2}) and (\ref{1}), for $i=2,\ldots,N_{a}$, we have
$r_{i1}^{\circ}\!=\!\sqrt{(\mathbf{u}^{\circ}\!-\!\mathbf{b}_i)^{T}(\mathbf{u}^{\circ}\!-\!\mathbf{b}_i)}\!-\!\sqrt{(\mathbf{u}^{\circ}\!-\!\mathbf{b}_1)^T(\mathbf{u}^{\circ}\!-\!\mathbf{b}_1)}$,
hence, we obtain
\begin{equation}\label{p1}
\dfrac{\partial r_{i1}^{\circ}}{\partial \mathbf{u}^{\circ T}}=\dfrac{(\mathbf{u}^{\circ}-\mathbf{b}_i)^T}{r_i^{\circ}}-\dfrac{(\mathbf{u}^{\circ}-\mathbf{b}_1)^T}{r_1^\circ}, \quad \dfrac{\partial r_{i1}^{\circ}}{\partial \dot{\mathbf{u}}^{\circ T}}=\mathbf{0}.
\end{equation}
Secondly, from (\ref{3}) and (\ref{4}), we have
$\dot{r}_{i1}^{\circ}=\dot{\mathbf{u}}^{\circ T}(\mathbf{u}^{\circ}-\mathbf{b}_i)/r^{\circ}_i-\dot{\mathbf{u}}^{\circ T}(\mathbf{u}^{\circ}-\mathbf{b}_1)/r^{\circ}_1
$,
then, we get
\begin{equation}\label{p2}
\begin{aligned}
\dfrac{\partial\dot{r}_{i1}^{\circ}}{\partial \mathbf{u}^{\circ T}}&=\dfrac{\dot{r}_1^{\circ}(\mathbf{u}^{\circ}-\mathbf{b}_1)^T}{(r_1^{\circ })^2}-\dfrac{\dot{r}_i^{\circ}(\mathbf{u}^{\circ}-\mathbf{b}_i)^{T}}{(r_i^{\circ })^2}+\dfrac{\dot{\mathbf{u}}^{\circ T}}{r_i^{\circ}}-\dfrac{\dot{\mathbf{u}}^{\circ T}}{r_1^{\circ}}, \\
\dfrac{\partial\dot{r}_{i1}^{\circ}}{\partial \dot{\mathbf{u}}^{\circ T}}&=\dfrac{(\mathbf{u}^{\circ}-\mathbf{b}_i)^T}{r_i^{\circ}}-\dfrac{(\mathbf{u}^{\circ}-\mathbf{b}_1)^T}{r^{\circ}_1}.
\end{aligned}
\end{equation}
Thirdly, according to (\ref{17}), for $j=1,\ldots,N_{a}$, we get
\begin{equation}\label{p}
(\mathbf{b}_j-\mathbf{u}^{\circ})^T\dfrac{\partial \mathbf{c}^{\circ}_j}{\partial \mathbf{u}^{\circ T}}=\mathbf{c}^{\circ T}_j,
\end{equation}
since $\mathbf{a}_j^{\circ T}[\cos\phi^{\circ}_j,\sin\phi^{\circ}_j,0]^T=\cos\theta^{\circ}_j$, we yield
\begin{eqnarray}\label{q}
	(\mathbf{b}_j -\mathbf{u}^{\circ})^T\dfrac{\partial \mathbf{c}^{\circ}_j}{\partial \mathbf{u}^{\circ T}} =-r^{\circ}_j\mathbf{a}_j^{\circ T}\!\dfrac{\partial \mathbf{c}^{\circ}_j}{\partial \phi^{\circ}_j} \dfrac{\partial \phi^{\circ}_j}{\partial \mathbf{u}^{\circ T}}= r^{\circ}_j\!\cos\theta_j^{\circ}\dfrac{\partial \phi^{\circ}_j}{\partial \mathbf{u}^{\circ T}},
\end{eqnarray}
substituting \eqref{q} into \eqref{p}, we have
\begin{equation}\label{p3}
\dfrac{\partial \phi^{\circ}_j}{\partial \mathbf{u}^{\circ T}}=\dfrac{\mathbf{c}^{\circ T}_j}{r^{\circ}_j\cos\theta_j^{\circ}}.
\end{equation}
Similarly, from \eqref{17}, we obtain
$
(\mathbf{u}^{\circ} \!-\!\mathbf{b}_j)^T\dfrac{\partial \mathbf{d}^{\circ}_j}{\partial \mathbf{u}^{\circ T}}+\mathbf{d}^{\circ T}_j=\mathbf{0},
$
that is,
$
(\mathbf{u}^{\circ} \!-\!\mathbf{b}_j)^T[\dfrac{\partial  \mathbf{d}^{\circ}_j}{\partial \theta^{\circ}_j}\dfrac{\partial\theta^{\circ}_j}{\partial \mathbf{u}^{\circ T}}+\dfrac{\partial \mathbf{d}^{\circ}_j}{\partial \phi^{\circ}_j}\dfrac{\partial\phi^{\circ}_j}{\partial \mathbf{u}^{\circ T}}]=-\mathbf{d}^{\circ T}_j,
$
which is equivalent to
$
r^{\circ}_j \mathbf{a}^{\circ T}_j\dfrac{\partial \mathbf{d}^{\circ}_j}{\partial \theta^{\circ}_j}\dfrac{\partial\theta^{\circ}_j}{\partial \mathbf{u}^{\circ T}}+r^{\circ}_j\mathbf{a}^{\circ T}_j\dfrac{\partial \mathbf{d}^{\circ}_j}{\partial \phi^{\circ}_j}\dfrac{\partial\phi^{\circ}_j}{\partial \mathbf{u}^{\circ T}}=-\mathbf{d}^{\circ T}_j,
$
since
$\mathbf{a}^{\circ T}_j\dfrac{\partial \mathbf{d}^{\circ}_j}{\partial \theta^{\circ}_j}=-1$ and
$ \mathbf{a}^{\circ T}_j\dfrac{\partial \mathbf{d}^{\circ}_j}{\partial \phi^{\circ}_j}=0$,
we get
\begin{equation}\label{p5}
\dfrac{\partial \theta^{\circ}_j}{\partial \mathbf{u}^{\circ T}}=\dfrac{\mathbf{d}^{\circ T}_j}{r^{\circ}_j}.
\end{equation}
Finally, it is obvious that
\begin{equation}\label{p6}
\dfrac{\partial \phi^{\circ}_j}{\partial \dot{\mathbf{u}}^{\circ T}}=\mathbf{0},
\ \ \dfrac{\partial \theta^{\circ}_j}{\partial \dot{\mathbf{u}}^{\circ T}}=\mathbf{0}.
\end{equation}
Therefore, the partial derivatives are given in \eqref{p1}, \eqref{p2}, \eqref{p3}, \eqref{p5}, and \eqref{p6}.

\section{}\label{E}
In this section, we prove that $\mbox{ cov}(\mathbf{x})\!\doteq\!\!\mbox{ CRLB}(\mathbf{x}^{\circ})$,
which holds, if and only if
\begin{equation}\label{101}
\mathbf{B}\mathbf{B}_1=\mathbf{G},
\end{equation}
where $\mathbf{G}$, $\mathbf{B}_{1}$, and $\mathbf{B}$ are defined by (\ref{191}), (\ref{99}) and (\ref{b}), respectively.
By direct matrix multiplications (\ref{101}) can be verified.
However, its proof relies on the following two key identities, for $i=2,3,\ldots,N_{a}$,
\begin{equation}\label{aa}
(a):\ r_{i}^{\circ}\!\left[\!\frac{(\mathbf{u}^{\circ}\!-\!\mathbf{b}_{i})^T}{r_{i}^{\circ}}\!-\!\frac{(\mathbf{u}^{\circ}\!-\!\mathbf{b}_{1})^{T}}{r_{1}^{\circ}}\!\right]\!\!=\!\!(\mathbf{b}_{1}\!-\!\mathbf{b}_{i})^{T}\!\!\!-\!r_{i1}^{\circ}\mathbf{a}_{1}^{\circ T}\!,
\vspace{-0.25cm}
\end{equation}
{\begingroup\makeatletter\def\f@size{11}\check@mathfonts
	\def\maketag@@@#1{\hbox{\m@th\normalsize\normalfont#1}}\setlength{\arraycolsep}{0.0em}\setlength{\arraycolsep}{0.0em}
	\begin{eqnarray}\label{bb}
\begin{split}
(b):&\dot{r}_{i}^{\circ}\!\!\left[\!\frac{(\mathbf{u}^{\circ}\!\!-\!\mathbf{b}_{i})^{T}}{r_{i}^{\circ}}\!-\!\frac{(\mathbf{u}^{\circ}\!\!-\!\mathbf{b}_{1})^{T}}{r_{1}^{\circ}}\!\right]\!\!+\!r_{i}^{\circ}\!\bigg[\!\frac{\dot{r}_{1}^{\circ}\!(\mathbf{u}^{\circ}\!\!-\!\mathbf{b}_{1})\!^{T}}{(r_{1}^{\circ})^{2}}
-\frac{\dot{r}_{i}^{\circ}\!(\mathbf{u}^{\circ}\!\!-\!\mathbf{b}_{i})\!^{T}}{(r_{i}^{\circ})^{2}}
\!+\!\frac{\dot{\mathbf{u}}^{\circ T}}{r_{i}^{\circ}}-\frac{\dot{\mathbf{u}}^{\circ T}}{r_{1}^{\circ}}\bigg]
\!+\!r_{i1}^{\circ}\dot{\phi}_{1}^{\circ}\!\cos\theta_{1}^{\circ}\mathbf{c}_{1}^{\circ T}\!+\!r_{i1}^{\circ}\dot{\theta}_{1}^{\circ}\mathbf{d}_{1}\!^{\circ T}\!\!
\\
&=\!-\dot{r}_{i1}^{\circ}\mathbf{a}_{1}^{\circ T}\!.
\end{split}
\end{eqnarray}\setlength{\arraycolsep}{5pt}\endgroup}Firstly, we prove (a), which is equivalent to
\begin{align}\label{140}
-r_{i1}^{\circ}\mathbf{a}_{1}^{\circ T}\! =\! (\mathbf{u}^{\circ}\!-\!\mathbf{b}_{i})^T\!-\!(\mathbf{b}_{1}\!-\!\mathbf{b}_{i})^{T}\!-\!r_{i}^{\circ}/r_{1}^{\circ}(\mathbf{u}^{\circ}\!-\!\mathbf{b}_{1})^{T}.
\end{align}
Since the right-hand side of (\ref{140}) equals to
\begin{equation}
(\mathbf{u}^{\circ}\!\!-\!\mathbf{b}_{1})^T\!\!\!-\!r_{i}^{\circ}/r_{1}^{\circ}(\mathbf{u}^{\circ}\!\!-\!\mathbf{b}_{1})^{T}
\!\!\!=\!r_{1}^{\circ}\mathbf{a}_{1}^{\circ T} \!\!-\!r_{i}^{\circ}\mathbf{a}_{1}^{\circ T}
\!=\!-r_{i1}^{\circ}\mathbf{a}_{1}^{\circ T},
\end{equation}
therefore, (a) holds. Then, we prove (b), and the left-hand side of \eqref{bb} equals to
\begin{equation}
\begin{split}
&\dot{r}_{i}^{\circ}\!(\!\mathbf{a}_{i}^{\circ T}\!\!-\!\mathbf{a}_{1}^{\circ T}\!)-\dot{r}_{i}^{\circ}\mathbf{a}_{i}^{\circ T}\!\!+\frac{r_{i}^{\circ}\dot{r}_{1}^{\circ}}{r_{1}^{\circ}}\mathbf{a}_{1}^{\circ T}\!\! +\dot{\mathbf{u}}^{\circ T}\!\!-\frac{r_{i}^{\circ}\dot{\mathbf{u}}^{\circ T}}{r_{1}^{\circ}}+r_{i1}^{\circ}\dot{\phi}_{1}^{\circ}\frac{\partial \mathbf{a}_{1}^{\circ T} }{\partial\phi_{1}^{\circ}}+r_{i1}^{\circ}\dot{\theta}_{1}^{\circ}\frac{\partial \mathbf{a}_{1}^{\circ T} }{\partial\theta_{1}^{\circ}}\\
=&\!(\frac{r_{i}^{\circ}\dot{r}_{1}^{\circ}}{r_{1}^{\circ}}-\dot{r}_{i}^{\circ})\mathbf{a}_{1}^{\circ T}\!\!\! -\!\frac{r_{i1}^{\circ}}{r_{1}^{\circ}}\dot{\mathbf{u}}^{\circ T}\!\!+\!r_{i1}^{\circ}(\dot{\phi}_{1}^{\circ}\frac{\partial \mathbf{a}_{1}^{\circ T} }{\partial\phi_{1}^{\circ}}+\dot{\theta}_{1}^{\circ}\frac{\partial \mathbf{a}_{1}^{\circ T} }{\partial\theta_{1}^{\circ}}\!),
\end{split}
\end{equation}
since $\dot{\mathbf{u}}^\circ=\dot{r}_1^\circ \mathbf{a}_1^\circ+r_1^\circ\dot{\mathbf{a}}_1^\circ$ and $\dot{\phi}_{1}^{\circ}\frac{\partial \mathbf{a}_{1}^{\circ T} }{\partial\phi_{1}^{\circ}}+\dot{\theta}_{1}^{\circ}\frac{\partial \mathbf{a}_{1}^{\circ T} }{\partial\theta_{1}^{\circ}}=\dot{\mathbf{a}}_{1}^{\circ T}$, we have
$
(\!\frac{r_{i}^{\circ}\dot{r}_{1}^{\circ}}{r_{1}^{\circ}}\!-\dot{r}_{i}^{\circ}\!)\mathbf{a}_{1}^{\circ T}\!\! -\!\frac{r_{i1}^{\circ}}{r_{1}^{\circ}}\dot{\mathbf{u}}^{\circ T}\!\!\!+r_{i1}^{\circ}\!(\dot{\phi}_{1}^{\circ}\!\frac{\partial \mathbf{a}_{1}^{\circ T} }{\partial\phi_{1}^{\circ}}\!+\dot{\theta}_{1}^{\circ}\!\frac{\partial \mathbf{a}_{1}^{\circ T} }{\partial\theta_{1}^{\circ}}\!)
\!=\!-\dot{r}_{i1}^{\circ}\mathbf{a}_{1}^{\circ T},
$
therefore, (b) holds.

\end{appendices}

\end{document}